\newcommand{\N}{\mathbb{N}}
\newcommand{\E}{\mathbb{E}}
\newcommand{\R}{\mathbb{R}}
\newcommand{\tdist}{\mathit{geom'}(\alpha)}
\newcommand{\cdist}{\mathit{Binom'}}
\newcommand{\fdist}{\mathcal{F}}
\newcommand{\tail}{\mathit{tail}}
\newcommand{\TM}{\textsc{Tail-max}}
\newcommand{\prev}{\mathit{prev}}
\newcommand{\maxT}{\mathit{maxTail(t)}}
\newcommand{\maxV}{\mathit{maxPair}_{\preceq}(t)}
\newcommand{\Tree}{\textsc{Tree}\,}
\newcommand{\Subtree}{\textsc{SubTree}\,}
\title{Optimal RANDAO Manipulation in Ethereum}
\author{Kaya Alpturer}%
       {Princeton University, NJ, USA}%
       {kalpturer@princeton.edu}%
       {https://orcid.org/0000-0003-4843-883X}{}
\author{S. Matthew Weinberg}%
       {Princeton University, NJ, USA}%
       {smweinberg@princeton.edu}%
       {https://orcid.org/0000-0001-7744-795X}{}
\authorrunning{K. Alpturer and S.\,M. Weinberg}
\keywords{Proof of Stake, Consensus, Blockchain, Ethereum, Randomness manipulation}
\begin{document}

\maketitle

\begin{abstract}
It is well-known that RANDAO manipulation is possible in Ethereum if an adversary controls the
proposers assigned to the last slots in an epoch. We provide a methodology to compute, for any
fraction $\alpha$ of stake owned by an adversary, the maximum fraction $f(\alpha)$ of rounds that a
strategic adversary can propose. We further implement our methodology and compute $f(\cdot)$ for all
$\alpha$. For example, we conclude that an optimal strategic participant with $5\%$ of the stake can
propose a $5.048\%$ fraction of rounds, $10\%$ of the stake can propose a $10.19\%$ fraction of
rounds, and $20\%$ of the stake can propose a $20.68\%$ fraction of rounds.
\end{abstract}

\section{Introduction}
Randomness is an essential component of blockchain protocols. With the invention of Proof of Work
blockchains \cite{nakamoto2008bitcoin}, a major innovation in Bitcoin was to use the
randomness of the
SHA256 function to select the next block proposer. In particular, a participant in the Bitcoin
ecosystem is able to propose a block of their choice with probability proportional to their
computational power. While this system satisfies many desirable properties, it is in many ways not
desirable due to inefficiency. With the move to Proof of Stake blockchain protocols, the dependence
on computation is replaced with stake in the digital currency itself. However, a new source of
randomness is needed to select the next block proposer with probability \emph{proportional to one's
stake}. A major security requirement for this randomness is for it to be verifiable (i.e. everyone
can verify that the block proposer lottery was not rigged) and unpredictable (i.e. before the
lottery happens, no one can know the winner).

Several approaches exist to provide this source of randomness to Proof of Stake blockchain
protocols. One approach is to use an external randomness beacon \cite{fw21}
which achieves similar guarantees as in Bitcoin.
However, implementing such a beacon comes with trust centralization
concerns. A more practical approach is to use protocols that rely on pseudorandom cryptographic
primitives to select block proposers, which are adopted by Proof of Stake blockchains such as
Ethereum.

While the system safety and liveness are not compromised by the randomness mechanism in current
Proof of Stake blockchains, it is well-known that they are susceptible to
manipulation (see \cite{ethresearchblog, vitalikspec}). In particular, incentive-incompatibilities
that result in block-withholding behavior exist in many Proof of Stake blockchain protocols—
analogous to selfish mining for Bitcoin \cite{EyalS14,ZET20,SapirshteinSZ16}.

There is a recent line of work focusing on Proof of Stake incentive incompatibilities
\cite{bnpw19,fw21,FHWY22,BahraniW24,FerreiraGHHWY24},
some of which derive concrete bounds on
optimally manipulating randomness for the Algorand protocol. However, while some
analyses such as \cite{ethresearchblog,eth2book} and simulation-based approaches such as
\cite{alturki2020statistical} conclude that randomness manipulation is likely to be negligible for
Ethereum, it is currently unknown how much more an adversary that \emph{optimally} manipulates
Ethereum can make. In this paper, we focus on answering this question and compute optimal
strategies for randomness manipulation in Ethereum.
Our approach relies on modeling the randomness manipulation game as a
Markov decision process.

\subsection{Brief overview of Proof of Stake Ethereum}
We now briefly cover the relevant details of the Proof of Stake Ethereum protocol.
In the Ethereum protocol, time is divided into epochs, each epoch is divided into 32 slots,
and each slot is 12 seconds. Each epoch is assigned 32 block proposers (one for each slot)
who can construct and broadcast a block to be added to the blockchain at that slot.
If the block proposer fails to do so, the slot is \emph{missed} (no block is added)
and the blockchain moves on to the next slot.

Proof of Stake Ethereum provides randomness by a scheme that maintains a random value called
the RANDAO (also known as  \verb|randao_reveal|) in each block \cite{eth2book}.
As each block is proposed, the previous RANDAO value is mixed using the private key
of the proposer. Since the private key is used to sign the epoch number and is mixed into
the previous RANDAO value by the xor operation, the mixing is \emph{verifiable}.
Moreover, as the signature is assumed to be
uniformly random and the private key is unknown to the public, it is \emph{unpredictable}.
These properties ensure that the only actions available to an adversary in influencing the
RANDAO value is to choose between \emph{broadcasting} or \emph{withholding} a block.

At the end of each epoch, the RANDAO value is used to select a set of 32 new proposers for the next
epoch\footnote{Ethereum actually skips an epoch in this process so the RANDAO value at the end of
epoch $i$ determines the 32 proposers for epoch $i+2$ -- we discuss this in more detail in Section~\ref{sec:randao-game}.}. For example,
if an adversary controls multiple validators and
happens to get assigned to propose in
slot 30 and 31 (the last two slots of an epoch), after slot 29 passes, the
adversary can use the RANDAO value at slot 29 to compute 4 different RANDAO outcomes for the next
epoch. If the adversary withholds both 30 and 31, the RANDAO value remains the same. If the
adversary withholds 30 and broadcasts 31, the RANDAO value is only mixed with the signature of the
proposer of 31, and so on. By precomputing 4 possible outcomes, the adversary is able to
select one of the 4 RANDAO values (at the cost of missing the relevant block rewards).
Similarly, in general, an adversary that controls the last $k$ proposers in an epoch is able to choose
from $2^k$ RANDAO values that determine the proposers for the next epoch.
We call the longest contiguous adversarial slots at the end of an epoch the tail.
With this scheme, it is conceivable that an adversary may strategically withhold their block at
specific slots to win the right to produce \emph{more} blocks in expectation.

\subsection{Main contributions}
Our main technical contributions in this paper are:
\begin{itemize}
    \item We model and formalize the game that an adversary with $\alpha < 1$ proportion of
          the total staked Ethereum plays in manipulating the RANDAO value.
    \item We show that the RANDAO manipulation game can be formulated as a Markov Decision Process, and we show how to significantly reduce the state space so that policy iteration on a laptop quickly converges.\footnote{Our implementation is available here: \url{https://github.com/kalpturer/randao-manipulation}}
    \item We present precise answers to the fraction of slots a strategic player can propose after optimally manipulating Ethereum's RANDAO.
\end{itemize}

\subsection{Related Work}

\noindent\textbf{Manipulating Ethereum's RANDAO.}  The name RANDAO comes from (and the scheme is inspired by) an earlier project~\cite{randao_original}, and its manipulability has been acknowledged and discussed in the Ethereum community~\cite{vitalikblog1, vitalikblog2}. Work of~\cite{alturki2020statistical} focuses on modeling the RANDAO
    mechanism\footnote{The RANDAO model in \cite{alturki2020statistical} is an earlier variation of the RANDAO mechanism that uses a commit-reveal scheme. The induced game, however, is quite similar. }
    in probabilistic rewrite logic
    and evaluating greedy strategies (analogous to $\TM$ of Section~\ref{sub:tailmax}).
    In evaluating the model, they follow a simulation based approach with epoch length 10 and
    report some biasability. In \cite{ethresearchblog}, some probabilistic analysis of how many blocks
    an adversary controlling the last two proposers in the current epoch can get in the next epoch is presented.
    In addition, it is demonstrated that in specific instances, some staking pools had the opportunity to
    control more than half the next epoch. Lastly, \cite{eth2book} shows that the number of proposers an adversary controls at the tail is
    expected to shrink as long as the adversarial stake is less than roughly $1/2$. It also provides a
    strategy analysis that considers tails of length 0 and 1, concluding marginal improvement over the honest.
    These results are consistent with ours, and as expected the reported improvement in rewards
    is less than the optimal strategy we compute. For example, a $25\%$ adversary with their single look-ahead
    strategy makes $2.99\%$ more than the honest while we compute that the optimal strategy
    makes $4.09\%$ more than the honest. In comparison to these works, our work nails down the optimal RANDAO manipulation, and via a principled framework that can accommodate slight modifications (such as epoch length) as well.\\

    \noindent\textbf{Computing Optimal Manipulations.} The most related methodological papers are~\cite{SapirshteinSZ16, FerreiraGHHWY24}, who also compute optimal strategic manipulations.~\cite{SapirshteinSZ16} computes optimal manipulations in Bitcoin's longest-chain protocol, and~\cite{FerreiraGHHWY24} computes optimal manipulations in Algorand's cryptographic self-selection. Our work is methodologically similar, as we also formulate an MDP and use some technical creativity to solve it. On the methodological front, our state-space reduction in Section~\ref{sec:mdp} is perhaps most distinct from prior work.\\

    \noindent\textbf{Manipulating Consensus Protocols, generally.} There is a significant and rapidly-growing body of work on manipulating consensus protocols broadly~\cite{EyalS14, SapirshteinSZ16, KiayiasKKT16, CarlstenKWN16, GorenS19, FiatKKP19, FerreiraW21, FerreiraHWY22, YaishTZ22, YaishSZ23, BahraniW24, FerreiraGHHWY24}. Aside from the aforementioned works, most of these do not compute \emph{optimal} manipulations, but instead understand when profitable manipulations exist. For Ethereum's RANDAO, it is already well-understood that profitable manipulations exist for arbitrarily small stakers, and so the key open problem is how profitable they are (which our work resolves).

\subsection{Roadmap}
In Section~\ref{sec:background}, we briefly cover the relevant background on
Markov Decision Processes. In Section~\ref{sec:randao-game} we formalize the RANDAO
manipulation process as a game. In Section~\ref{sec:mdp}, we formulate the RANDAO
manipulation game as a Markov Decision Process and reduce the state space to a tractable regime. In Section~\ref{sec:eval} and
\ref{sec:solve-opt} we describe how to evaluate and solve for optimal
policies.
Lastly, in
Section~\ref{sec:discussion} we conclude with a discussion on
modeling assumptions, future work, and block miss rates.
We defer some proofs to the appendix.

\section{Preliminaries : Markov Decision Processes} \label{sec:background}
In this section, we quickly review the necessary background for Markov chains
and Markov decision/reward processes.
The material in this section is largely drawn from \cite{howard1960dynamic}
and \cite{puterman2014markov}. These texts can be consulted for a more
detailed treatment.

A \emph{Markov chain} $C = (S,P)$ consists of a set of states $S$ and transition
probabilities $P : S \times S \to \R$. Given a current state $s \in S$,
we transition to the next state with the probability distribution induced by $P(s,\cdot)$.
Rewards can be added to this framework with a reward function $R : S \times S \to \R$
such that if we transition from state $s$ to $s'$, we get $R(s,s')$ reward.
A Markov chain with rewards is a \emph{Markov reward process (MRP)}.

An agent navigating a Markovian system can be modelled using a \emph{Markov
decision process (MDP)}. An MDP is a tuple $M = (S, A, \{P_{a}\}_{a\in A},
\{R_{a}\}_{a\in A})$ where $S$ is a set of states, $A$ is a set of actions,
$P_{a} : S \times S \to \R$ is a transition probability function representing the
probability of individual transitions given an action $a \in A$, and
$R_{a} : S \times S \to \R$ represents the reward of transitioning between
individual states with action $a \in A$.

A \emph{policy} $\pi : S \to A$ is a function specifying
which actions to take given the current state. Once we fix a policy in an MDP,
we get an MRP. We only consider deterministic policies since the standard results
\cite{puterman2014markov} show that in the models we consider, an optimal
deterministic policy exists.

We will be modeling the RANDAO manipulation game as an MDP.
We now introduce some definitions and  properties that will be
useful when we introduce our MDP.

\subsection{Properties of Markovian systems}

One important property concerns whether some states are visited infinitely often.

A state $s$ in a Markov Chain is \emph{recurrent} if, conditioned on currently being at state $s$, the probability of later returning to state $s$ is $1$. If a state is not recurrent, it is called \emph{transient}.
A \emph{recurrent class} of states is a set of recurrent states $\hat{S}$ such that, for all $s \in \hat{S}$, conditioned on being at state $s$:
for all $s' \in \hat{S}$, the probability of visiting $s'$ at a later time is $>0$.

A Markov chain is \emph{ergodic} if it consists of a single recurrent class of states.
Similarly an MDP is ergodic if for every
deterministic policy,\footnote{We only work with stationary policies which are policies that do not change over time. For the processes we consider a stationary optimal policy is guaranteed to exist.}
the Markov chain induced by the policy is ergodic. All MDPs we will consider will be
ergodic so for the rest of this section we assume ergodicity.

A \emph{stationary distribution} $\boldsymbol{\sigma}$ of a transition probability matrix
$\mathbf{P}$ in a Markov chain is defined to be a solution to the following:
\[
  \boldsymbol{\sigma} = \boldsymbol{\sigma}\mathbf{P}
  \qquad \text{ and } \qquad
  \sum_{i}\sigma_{i} = 1
\]

\begin{proposition}[\cite{puterman2014markov}] %
  If a Markov
  chain
  is ergodic, then there exists a unique
  stationary distribution.
\end{proposition}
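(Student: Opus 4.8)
The plan is to prove existence and uniqueness separately. By hypothesis the chain consists of a single recurrent class, so on a finite state space it is irreducible (every pair of states communicates) with every state recurrent; the only properties of the transition matrix $\mathbf{P}$ I would use are that it is stochastic and that it is irreducible.

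\emph{Existence.} I would apply Brouwer's fixed-point theorem to the map $T(\boldsymbol{\sigma}) = \boldsymbol{\sigma}\mathbf{P}$ on the probability simplex $\Delta = \{\boldsymbol{\sigma} : \sigma_i \ge 0,\ \sum_i \sigma_i = 1\}$: since $\mathbf{P}$ is stochastic, $T$ maps the compact convex set $\Delta$ continuously into itself, hence has a fixed point, which is exactly a stationary distribution. A fixed-point-free alternative, which I might prefer to present, is to take a convergent subsequence of the Cesàro averages $\mathbf{A}_N = \frac{1}{N}\sum_{n=0}^{N-1}\mathbf{P}^n$ — these are stochastic matrices, so they live in a compact set — and observe that the limit $\mathbf{A}$ satisfies $\mathbf{A}\mathbf{P} = \mathbf{A}$ because $\mathbf{A}_N\mathbf{P} - \mathbf{A}_N = \frac{1}{N}(\mathbf{P}^N - \mathbf{I}) \to \mathbf{0}$; thus each row of $\mathbf{A}$ is a stationary distribution. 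This route has the side benefit that it needs no irreducibility for existence, and it sidesteps the fact that the notion of ergodicity used here does not include aperiodicity, so $\lim_{n}\mathbf{P}^n$ itself need not exist.

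\emph{Uniqueness.} The crux is a positivity lemma: any coordinatewise nonnegative row vector $\boldsymbol{\rho} \neq \mathbf{0}$ with $\boldsymbol{\rho}\mathbf{P} = \boldsymbol{\rho}$ has all coordinates strictly positive. I would prove this by picking $i$ with $\rho_i > 0$ and, for an arbitrary state $j$, using irreducibility to produce a positive-probability path $i = i_0, i_1, \dots, i_m = j$; then $\rho_j = (\boldsymbol{\rho}\mathbf{P}^m)_j \ge \rho_i \prod_{k} P(i_k, i_{k+1}) > 0$, since every summand in $\boldsymbol{\rho}\mathbf{P}^m$ is nonnegative. Given the lemma, suppose $\boldsymbol{\sigma}$ and $\boldsymbol{\tau}$ are both stationary distributions; both are strictly positive by the lemma. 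Let $c = \min_i \sigma_i/\tau_i$, attained at some index $j$, and set $\boldsymbol{\rho} = \boldsymbol{\sigma} - c\boldsymbol{\tau}$: it is coordinatewise nonnegative, fixed by $\mathbf{P}$, and $\rho_j = 0$, so the lemma forces $\boldsymbol{\rho} = \mathbf{0}$, i.e.\ $\boldsymbol{\sigma} = c\boldsymbol{\tau}$; summing coordinates gives $c = 1$, hence $\boldsymbol{\sigma} = \boldsymbol{\tau}$.

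I expect the positivity lemma to be the only genuinely load-bearing step, and it is exactly where irreducibility enters: without it, two stationary distributions supported on disjoint recurrent classes would coexist and nothing else in the argument would exclude them. Everything else — stochasticity and continuity of $T$, stochasticity of the $\mathbf{A}_N$, the telescoping identity for $\mathbf{A}_N\mathbf{P} - \mathbf{A}_N$, and the path estimate — is routine.
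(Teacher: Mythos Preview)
Your argument is correct and self-contained: the Ces\`aro-average construction for existence and the minimum-ratio positivity argument for uniqueness are both standard and sound, and you rightly flag that the paper's notion of ``ergodic'' (single recurrent class) does not entail aperiodicity, so working with Ces\`aro averages rather than $\lim_n \mathbf{P}^n$ is the right move.

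That said, the paper does not prove this proposition at all --- it is stated as background and attributed to Puterman's textbook, with no accompanying argument. So there is nothing to compare against: you have supplied a proof where the paper simply cites one. Your write-up would be appropriate for a self-contained exposition, whereas the paper treats the result as a black box from the MDP literature.
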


\subsection{Reward criteria}
Now we define the reward criteria for a Markov decision process $M$.
For our application, \emph{average reward} is more appropriate than
\emph{discounted reward} since we are concerned with the infinite
behavior of the system.

Let $\rho_{\pi,s}(m)$ be a random variable that is equal to the reward at time
$m$ while transitioning the state at time $m$ to the state at time $m+1$
in some MDP when running policy $\pi$ starting at state $s$.

The \emph{average reward} of a policy $\pi$ is defined as:
\[
  \Gamma_{\pi}
  =  \lim_{N \to \infty} \E \left[ \frac{1}{N}
     \sum_{m = 0}^{N} \rho_{\pi}(m) \right]
\]
where we rely on the following result to ignore the initial state.
\begin{proposition}[\cite{puterman2014markov}]
  The average reward for ergodic MDPs is initial state independent.
\end{proposition}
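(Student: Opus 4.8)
The plan is to fix an arbitrary deterministic stationary policy $\pi$, which turns the MDP into an ergodic Markov reward process with transition matrix $\mathbf{P}$ and reward function $R$, and to show that the Cesàro average defining $\Gamma_\pi$ converges to a quantity that does not depend on the starting state $s$. Since the relevant state space is finite and $R$ is bounded, linearity of expectation over the finite inner sum gives $\E\bigl[\tfrac1N\sum_{m=0}^N \rho_{\pi,s}(m)\bigr] = \tfrac1N\sum_{m=0}^N \E[\rho_{\pi,s}(m)]$, so it suffices to analyze the deterministic sequence $\E[\rho_{\pi,s}(m)]$.

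First I would fold the two-argument reward into a one-step expected reward vector $\mathbf{r}$ with entries $r_{s'} = \sum_{s''} P(s',s'')\,R(s',s'')$. Starting from $s$, the law of the state at time $m$ is the $s$-th row of $\mathbf{P}^m$, so $\E[\rho_{\pi,s}(m)] = (\mathbf{P}^m\mathbf{r})_s$, and therefore
\[
  \frac1N\sum_{m=0}^N \E[\rho_{\pi,s}(m)]
  = \Bigl[\,\Bigl(\tfrac1N\sum_{m=0}^N\mathbf{P}^m\Bigr)\mathbf{r}\,\Bigr]_s .
\]
Everything is thus reduced to the limiting behavior of the matrix Cesàro averages $\tfrac1N\sum_{m=0}^N\mathbf{P}^m$.

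The crux is to prove $\tfrac1N\sum_{m=0}^N\mathbf{P}^m \to \boldsymbol{\Pi}$, where $\boldsymbol{\Pi} := \mathbf{1}\boldsymbol{\sigma}$ is the rank-one stochastic matrix each of whose rows equals the stationary distribution $\boldsymbol{\sigma}$ furnished by the previous proposition (one checks directly that $\boldsymbol{\Pi}$ is idempotent and commutes with $\mathbf{P}$, so it is exactly the spectral projector for the eigenvalue $1$). One cannot hope for $\mathbf{P}^m$ itself to converge — a periodic ergodic chain such as the deterministic two-cycle makes $\mathbf{P}^m$ oscillate forever — and this periodicity is precisely the obstacle that the Cesàro average is designed to absorb. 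I would dispatch it either by citing the Perron–Frobenius / ergodic theorem for finite chains, or directly from the spectral decomposition of $\mathbf{P}$: since the chain is a single recurrent class, $\mathbf{P}$ is irreducible and $1$ is a simple eigenvalue with right eigenvector $\mathbf{1}$ and left eigenvector $\boldsymbol{\sigma}$; since the powers $\mathbf{P}^m$ are uniformly bounded, every other unit-modulus eigenvalue is semisimple, and for any eigenvalue $\lambda \neq 1$ with $|\lambda|\le 1$ one has $\tfrac1N\sum_{m=0}^N\lambda^m \to 0$ (geometric-series bound), while Jordan blocks at eigenvalues with $|\lambda| < 1$ contribute terms that decay outright; summing the spectral contributions leaves only $\boldsymbol{\Pi}$. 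Plugging this in yields $\Gamma_\pi = (\boldsymbol{\Pi}\mathbf{r})_s = \boldsymbol{\sigma}\cdot\mathbf{r} = \sum_{s'}\sigma_{s'} r_{s'}$, which is manifestly independent of $s$. The main work, and the only genuinely subtle point, is establishing this Cesàro convergence in the presence of possible periodicity of $\mathbf{P}$.
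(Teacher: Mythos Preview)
Your argument is correct. The paper does not actually supply a proof of this proposition at all; it merely cites it from Puterman, so there is no ``paper's own proof'' to compare against. Your spectral/Ces\`aro route (reduce to $(\tfrac1N\sum_m \mathbf{P}^m)\mathbf{r}$, use irreducibility to get simplicity of the eigenvalue $1$, use boundedness of $\mathbf{P}^m$ to force semisimplicity at other unit-modulus eigenvalues, and observe that $\tfrac1N\sum_m \lambda^m \to 0$ for $\lambda\neq 1$) is a standard and complete argument for the finite-state case, which is all the paper needs. The one implicit assumption you lean on is finiteness of $S$; the paper never states this explicitly in the preliminaries, but every MDP it actually considers has finite state space, so this is harmless here.
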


Let $q(s)$ be the expected reward of transitioning from state $s$.
More formally $q(s) = \sum_{s' \in S} R_{\pi(s)}(s,s')P_{\pi(s)}(s,s')$.
Then, $\Gamma_{\pi}$ can be computed using the stationary distribution $\boldsymbol{\sigma}^\pi$
of the Markov chain induced by fixing policy $\pi$.
\[
    \Gamma_{\pi} = \sum_{s \in S} q(s)\sigma^{\pi}_{s}
\]

\noindent\textbf{Value Functions.} In any recurrent process, it is a useful concept to understand the `value' of being in one state over another, due to the potential future rewards. With non-discounted rewards, this requires some subtlety to properly define (because the expected future reward from any state is infinite). One standard method is to define the value of a state as its \emph{average adjusted sum of rewards}:
\[
  v_{\pi}(s)
  = \lim_{N \to \infty}
      \E \left[
        \sum_{m = 0}^{N} (\rho_{\pi,s}(m) - \Gamma_{\pi}).
      \right]
\]

That is, the average adjusted sum of rewards captures the additive difference between an unbounded process starting from state $s$ and iterating $\pi$ and an unbounded process that earns $\Gamma_\pi$ (the average per-round reward of $\pi$) per round.

\begin{lemma}[\cite{howard1960dynamic}]
For an ergodic MDP $M$,
\[
  v_{\pi}(s) + \Gamma_{\pi} = q(s) + \sum_{s' \in S} P_{\pi(s)}(s,s') v_{\pi}(s')
\]
\end{lemma}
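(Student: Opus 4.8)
The plan is to prove the Poisson-type equation $v_\pi(s) + \Gamma_\pi = q(s) + \sum_{s'} P_{\pi(s)}(s,s') v_\pi(s')$ by a direct manipulation of the defining limit for $v_\pi(s)$, conditioning on the first transition. First I would expand the definition: $v_\pi(s) = \lim_{N\to\infty} \E\left[\sum_{m=0}^N (\rho_{\pi,s}(m) - \Gamma_\pi)\right]$. I would peel off the $m=0$ term, which contributes $\E[\rho_{\pi,s}(0)] - \Gamma_\pi = q(s) - \Gamma_\pi$ by the definition of $q(s)$, since the first reward is exactly the expected reward of transitioning from $s$ under action $\pi(s)$. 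This leaves $v_\pi(s) = q(s) - \Gamma_\pi + \lim_{N\to\infty}\E\left[\sum_{m=1}^N (\rho_{\pi,s}(m) - \Gamma_\pi)\right]$.

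Next I would handle the remaining tail sum by conditioning on the state $s'$ reached after the first transition, which occurs with probability $P_{\pi(s)}(s,s')$. By the Markov property, conditioned on being at $s'$ at time $1$, the process $(\rho_{\pi,s}(m))_{m\ge 1}$ has the same law as $(\rho_{\pi,s'}(m'))_{m'\ge 0}$ started fresh at $s'$. Hence $\lim_{N\to\infty}\E\left[\sum_{m=1}^N (\rho_{\pi,s}(m) - \Gamma_\pi) \,\middle|\, s_1 = s'\right] = v_\pi(s')$, and taking the expectation over $s'$ gives $\sum_{s'} P_{\pi(s)}(s,s') v_\pi(s')$. Combining the two displays yields the claimed identity.

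The main obstacle is justifying the interchange of the limit with the (finite, since $S$ is finite) sum over first-transition states, and more fundamentally, arguing that the limit defining $v_\pi(s)$ exists in the first place. For an ergodic finite Markov chain this is standard: the partial sums $\E\left[\sum_{m=0}^N (\rho_{\pi,s}(m) - \Gamma_\pi)\right]$ converge because $\E[\rho_{\pi,s}(m)] \to \Gamma_\pi$ geometrically fast (the chain mixes geometrically, so $\sum_m |\E[\rho_{\pi,s}(m)] - \Gamma_\pi| < \infty$). Given this absolute convergence, Fubini/dominated convergence lets me freely split the sum at $m=0$ and push the conditional expectation inside, so the manipulations above are rigorous. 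Since the excerpt already grants us the existence of a unique stationary distribution and the well-definedness of $v_\pi$, I would cite those facts and keep the convergence discussion brief, focusing the write-up on the one-step conditioning argument.
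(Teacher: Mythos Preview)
The paper does not actually prove this lemma: it is stated as a cited result from \cite{howard1960dynamic} with no accompanying proof, so there is no ``paper's own proof'' to compare against. Your one-step conditioning argument is the standard derivation of the Poisson equation for average-reward ergodic MDPs and is correct; the convergence justification you sketch (geometric mixing of a finite ergodic chain giving absolute summability of $\E[\rho_{\pi,s}(m)]-\Gamma_\pi$) is exactly what is needed to make the limit manipulations rigorous.
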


Since we can compute $\Gamma_{\pi}$ first given a policy,
this equation determines all $v_{\pi}$ up to an additive constant which we can solve for
after setting $v_{\pi}(s) = 0$ for some $s \in S$.\footnote{The average reward $\Gamma_{\pi}$ is sometimes called \emph{gain}
  and what we call the \emph{value} $v_{\pi}$,
  which is the average adjusted sum of rewards, is sometimes called \emph{bias} in the literature.}

To find the optimal policy with respect to the average reward criterion,
we can run the policy iteration algorithm of \cite{howard1960dynamic, puterman2014markov}.
Starting from an arbitrary policy $\pi_{0}$,
evaluate the policy to compute $\Gamma_{\pi_{0}}$ and $v_{\pi_{0}}$.
Then, a policy improvement step is performed which defines
$\pi_{i+1}(s) := \arg\max_{a \in A} \{q_{a}(s) + \sum_{s' \in S} P_{a}(s,s') v_{\pi_{i}}(s') \}$.
The following Bellman optimality equation guarantees that once this process
stabilizes, we have an average reward optimal policy.
\begin{theorem}[Bellman optimality equation for average-reward MDPs, \cite{howard1960dynamic}]
  If a policy $\pi^{*}$ satisfies the following equations in an MDP,
  then $\pi^*$ is average reward optimal.
  \[
    v_{\pi^{*}}(s) + \Gamma_{\pi^{*}}
    = \max_{a \in A}\left\{
        q_{a}(s) + \sum_{s'\in S} P_{a}(s,s')v_{\pi^{*}}(s')
    \right\}
    \qquad \qquad \forall s \in S
  \]
\end{theorem}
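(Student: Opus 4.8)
The plan is to prove the Bellman optimality equation for average-reward MDPs by a direct comparison argument, showing that any policy $\pi^*$ satisfying the stated fixed-point equations achieves average reward at least that of an arbitrary competing policy $\pi$. First I would fix an arbitrary deterministic policy $\pi$ and, for each state $s$, use the hypothesis to write
\[
  v_{\pi^{*}}(s) + \Gamma_{\pi^{*}}
  = \max_{a \in A}\left\{ q_{a}(s) + \sum_{s' \in S} P_{a}(s,s') v_{\pi^{*}}(s') \right\}
  \geq q_{\pi(s)}(s) + \sum_{s' \in S} P_{\pi(s)}(s,s') v_{\pi^{*}}(s').
\]
Writing this inequality in vector form, with $\mathbf{P}_{\pi}$ the transition matrix induced by $\pi$, $\mathbf{q}_{\pi}$ the vector of one-step expected rewards, and $\mathbf{v}^{*}$ the value vector of $\pi^{*}$, we get $\mathbf{v}^{*} + \Gamma_{\pi^{*}}\mathbf{1} \geq \mathbf{q}_{\pi} + \mathbf{P}_{\pi}\mathbf{v}^{*}$, i.e. $\Gamma_{\pi^{*}}\mathbf{1} \geq \mathbf{q}_{\pi} + (\mathbf{P}_{\pi} - I)\mathbf{v}^{*}$.

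Next I would multiply this inequality on the left by the stationary distribution $\boldsymbol{\sigma}^{\pi}$ of the Markov chain induced by $\pi$, which exists and is unique by the ergodicity assumption and the cited proposition. Since $\boldsymbol{\sigma}^{\pi}$ is a nonnegative row vector, the inequality is preserved: $\Gamma_{\pi^{*}}\,\boldsymbol{\sigma}^{\pi}\mathbf{1} \geq \boldsymbol{\sigma}^{\pi}\mathbf{q}_{\pi} + \boldsymbol{\sigma}^{\pi}(\mathbf{P}_{\pi} - I)\mathbf{v}^{*}$. Now I would invoke the two defining properties of the stationary distribution: $\boldsymbol{\sigma}^{\pi}\mathbf{1} = 1$ and $\boldsymbol{\sigma}^{\pi}\mathbf{P}_{\pi} = \boldsymbol{\sigma}^{\pi}$, so that $\boldsymbol{\sigma}^{\pi}(\mathbf{P}_{\pi} - I) = \mathbf{0}$ and the last term vanishes. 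This leaves $\Gamma_{\pi^{*}} \geq \boldsymbol{\sigma}^{\pi}\mathbf{q}_{\pi} = \sum_{s} q_{\pi}(s)\sigma^{\pi}_{s} = \Gamma_{\pi}$, where the final equality is exactly the formula for average reward in terms of the stationary distribution stated earlier in the excerpt. Since $\pi$ was arbitrary, $\pi^{*}$ is average-reward optimal.

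The main subtlety — and the step I would be most careful about — is the cancellation $\boldsymbol{\sigma}^{\pi}(\mathbf{P}_{\pi} - I)\mathbf{v}^{*} = \mathbf{0}$: this is where ergodicity is essential, since it guarantees that $\boldsymbol{\sigma}^{\pi}$ is a genuine left eigenvector of $\mathbf{P}_{\pi}$ for eigenvalue $1$ (and that it is unique, so there is no ambiguity about which stationary distribution to use), and it is also what makes $\Gamma_{\pi}$ well-defined and initial-state independent in the first place. A secondary point worth stating explicitly is that $\mathbf{v}^{*}$ is finite and well-defined — this follows from the Lemma of~\cite{howard1960dynamic} quoted above, which characterizes $v_{\pi^{*}}$ as the solution (unique up to an additive constant) of the evaluation equations for $\pi^{*}$; the additive constant is irrelevant here because it is annihilated by $(\mathbf{P}_{\pi} - I)$ anyway. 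Everything else is routine linear algebra, so I would keep the write-up short, emphasizing only the vector-form inequality, the left-multiplication by $\boldsymbol{\sigma}^{\pi}$, and the two stationarity identities.
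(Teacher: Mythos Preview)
The paper does not prove this theorem at all; it is stated as a background result cited from \cite{howard1960dynamic} (and implicitly \cite{puterman2014markov}) in the preliminaries section, with no accompanying proof. Your argument is correct and is precisely the standard proof one finds in those references: bound the max by the value at $a=\pi(s)$, pass to vector form, left-multiply by the stationary distribution of the competing policy, and use $\boldsymbol{\sigma}^{\pi}\mathbf{P}_{\pi}=\boldsymbol{\sigma}^{\pi}$ to kill the $(\mathbf{P}_{\pi}-I)\mathbf{v}^{*}$ term. The two caveats you flag (ergodicity to guarantee existence/uniqueness of $\boldsymbol{\sigma}^{\pi}$, and well-definedness of $\mathbf{v}^{*}$ up to an additive constant that is annihilated anyway) are exactly the right ones.
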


\section{The RANDAO manipulation game} \label{sec:randao-game}
We now review RANDAO in more detail, and formulate the RANDAO manipulation game. RANDAO is a pseudorandom seed that updates every block, and is used to select Ethereum proposers. Below, we use the terminology $R(b)$ to denote the RANDAO value after the $b^{th}$ slot has finished.\\

\noindent\textbf{Updating RANDAO.} The process for updating $R(b)$ is quite simple. If no one proposes a block during slot $b$ of epoch $x$, then $R(b) = R(b-1)$. If a block is proposed during slot $b$,
the proposer must also digitally sign the epoch number $x$
and the hash of this digital signature is XORed with $R(b-1)$ to produce $R(b)$. Note, in particular, that there is a unique private key eligible to propose a block during slot $b$, and therefore the only two possibilities for $R(b)$ are either $R(b-1)$ (if no block is proposed) or $R(b-1)$ XOR
hash(signature of $x$ by proposer for slot $b$).\\

\noindent\textbf{Using RANDAO to seed epochs.} The Ethereum blockchain consists of epochs, where each epoch contains 32 blocks. Within each epoch $t$, a seed $S(t)$ determines which private keys are eligible to propose during each slot. That is, for each of the 32 slots in an epoch, the proposer of that slot is a deterministic function of $S(t)$ (but $S(t)$ is a pseudorandom number). Moreover, if $S(t)$ is a uniformly random number, then each slot proposer is independently and uniformly randomly drawn proportional to stake.

$S(t)$ is set based on RANDAO. Specifically, $S(t)$ is equal to the value of RANDAO at the end of epoch $t-2$. To be extra clear, there have been $32(t-2)$ slots completed by the end of epoch $t-2$, so $S(t):=R(32t-64)$.\\

\noindent\textbf{Rewards.} In practice, proposer rewards involve transaction fees, Maximal Extractable Value (MEV), and any payments made in the Proposer-Builder-Separation (PBS) ecosystem. To streamline analysis, and to be consistent with an overwhelming majority of prior work, we focus on the \emph{fraction of slots} where an adversary proposes.\footnote{Of course, it is an appropriate direction for future work to instead explicitly model transaction fees, MEV, etc. Such modeling would only make an adversary stronger, as their strategy can now depend on the value of each slot~\cite{CarlstenKWN16}.} That is, we consider an adversary who aims to maximize the fraction of slots where they propose.\\

\noindent\textbf{Ideal Cryptography.} It is widely-believed that digital signatures of a previously-unsigned message using an unknown private key (and hashes of previously-unhashed inputs, etc.) are indistinguishable from uniformly-random numbers by computationally-bounded adversaries. However, such cryptographic primitives do not generate truly uniformly-random numbers. For the sake of tractability, and in a manner that is consistent with all prior work studying strategic manipulations of consensus protocols, we consider a mathematical model based on idealized cryptographic primitives (i.e.~that hashing a previously-unhashed input produces a uniformly random number, independent of all prior computed hashes). \\

\noindent\textbf{RANDAO Manipulation Game v1.} We now formally define the RANDAO Manipulation Game (v1). After defining the game, we note its connection to Ethereum's RANDAO, and then proceed to simplify the game. Consistent with an overwhelming majority of prior work, we consider a single strategic manipulator optimizing against honest participants.\footnote{An honest participant proposes a block during every round they are eligible.}

\begin{definition}[RANDAO Manipulation Game v1] The RANDAO Manipulation Game proceeds in epochs $1, 2, \ldots, n, \ldots$. Each epoch has $\ell:=32$ slots. The strategic player has an $\alpha$ fraction of stake.
\begin{itemize}
\item Initialize \textsc{Reward}:=0.
 \item At all times, there is a RANDAO-generated list $R:=\langle R_1,\ldots, R_{32} \rangle \in \{S, H\}^{\ell}$. $R$ denotes the list of $\ell$ proposers based on the current value of RANDAO, and $R_i$ denotes whether the strategic player ($S$) or an honest player ($H$) would propose in an epoch using the current value of RANDAO.
 \item Initialize $R$ so that each coordinate of $R$ is drawn iid, and equal to $S$ with probability $\alpha$.
 \item For each epoch $n:=1,\ldots$
 \begin{itemize}
     \item Store $R^n:=R$ and set the proposers for epoch $n$ equal to $R^n$.
     \item At all times during this epoch, for any set $B$ of slots such that $R^n_i = S$ for all $i \in B$, Strategic Player can compute $R|_B$, which represents how the list of $32$ proposers would update if Strategic Player were to propose a block in exactly slots $B$ and no other blocks are proposed, given that the current RANDAO induces $R$.
     \item For each slot $i:=1,\ldots, \ell$:
     \begin{itemize}
         \item If $R^n_i= H$:
         \begin{itemize}
             \item Update $R$ to redraw each coordinate of $R$ iid, and equal to $S$ with probability $\alpha$.
             \item For all $B$, update $R|_B$ to redraw each coordinate of $R$ iid, and equal to $S$ with probability $\alpha$.
         \end{itemize}
         \item If $R^n_i = S$:
         \begin{itemize}
             \item Strategic Player chooses whether to propose or not.
             \item If they choose to propose: (a) Add $+1$ to $\textsc{Reward}$, and (b) update $R:=R|_i$, and $R|_B:=R|_{B \setminus \{i\}}$ for all $B$.
        \end{itemize}
        \item Store $\textsc{Reward}(n):=\textsc{Reward}$.
     \end{itemize}
 \end{itemize}
 Strategic Player's reward is $\lim\inf_{n \rightarrow \infty} \{\textsc{Reward}(n)/(\ell n)\}$.
\end{itemize}

\end{definition}

Let us now overview the game above, highlight why it captures RANDAO manipulation on Ethereum, and where we've made stylizing assumptions.

\begin{itemize}
    \item First, observe that the epoch's slot proposers are a deterministic function of RANDAO. We have skipped explicitly representing the RANDAO value, and focused only on the resulting proposers in $R$.
    \item Next, observe that every time RANDAO changes \emph{due to an Honest digital signature}, we've randomly redrawn each proposer i.i.d.~and equal to $S$ with probability $\alpha$. This makes two stylizing assumptions.
    \begin{itemize}
        \item First, we've assumed that uniformly RANDAO seed generates proposers iid proportional to stake. This may not be literally true,
        as Ethereum employs a more complicated sampling\footnote{Roughly speaking, for each slot Ethereum shuffles the set of active validators and starts iterating over the shuffled list. A validator is selected to be the proposer for this slot with probability equal to its effective balance over 32.}. However,
        this stylizing assumption has negligible effect for the vast majority of real-world conditions\footnote{As long as most validators have effective balance equal to the maximum, Ethereum essentially selects the proposer using a uniformly random sample. Ethereum's real world conditions match this assumptions
        since the vast majority of validators have maximum effective balance.}.
        \item Second, we've assumed that the hash of an honest digital signature is distributed uniformly at random from the perspective of Strategic Player. This assumes an Ideal hash function and Ideal digital signature (as consistent with prior work~\cite{FerreiraW20, FerreiraHWY22, FerreiraGHHWY24}), although in practice it only holds that the distribution is indistinguishable from uniform to a computationally-bounded adversary.
    \end{itemize}
    \item In our game, the RANDAO value relevant for epoch $t$ is whatever RANDAO is at the end of epoch $t-1$. In Ethereum proper, the RANDAO value relevant for epoch $t$ is at the end of epoch $t-2$. However, we claim our modeling choice is \emph{almost} wlog. Specifically, observe that there are essentially two RANDAO Manipulation Games being played: one on odd epochs, and one on even epochs. That is, the RANDAO value at the end of epoch $2t-1$ determines the proposers for epoch $2t+1$ for all $t$, just as in our RANDAO Manipulation Game. The only distinction to our game is that the RANDAO value at the start of epoch $2t+1$ is \emph{not} equal to the RANDAO value at the end of epoch $2t-1$ (whereas in our RANDAO Manipulation Game, it is) -- the RANDAO value can change during round $2t$. However, \emph{as long as there is at least one Honest proposer during round $2t+1$}, the RANDAO value at the start of epoch $2t+1$ doesn't matter anyway, because it will be reset to uniformly random (at least, from the Strategic Player's perspective).
    \item To elaborate on the previous bullet, \emph{as long as an Honest player proposes in at least one slot in every epoch}, our RANDAO Manipulation Game v1 correctly models all odd Ethereum epochs, and separately correctly models all even Ethereum epochs.
    \item Finally, observe that our Strategic Player receives a reward of one exactly when they propose a block, and their reward is indeed equal to the time-averaged fraction of rounds in which they propose.
    \item To summarize, our stylized game captures RANDAO manipulation in Ethereum with three exceptions: (a) it assumes Ideal cryptography for simplicity of analysis, (b) it assumes proposers in each epoch are drawn
    i.i.d. proportional to fixed stake,
    (c) it assumes that every epoch contains at least one Honest proposer. (a) is a natural assumption consistent with prior works, and essentially abstracts strategic manipulation away from breaking cryptography. The impact of (b) is negligible,
    as the distinction with Ethereum's shuffling and iteration based approach is negligible
    for an essentially uniform\footnote{Here, by uniform, we are referring to the \emph{effective balance} of validators. For Ethereum's current validator set, almost all have maximum effective balance which is 32 ETH.} set of over one million validators.
    (c) is also negligible, as the probability that a Strategic Player could ever induce the next epoch to be the first with no Honest proposers is at most $(2\alpha)^{32}$.
    \footnote{To see this, observe that Strategic Player has at most $2^{32}$ options to seed the subsequent epoch, and for each option the probability that it has no Honest proposers is $\alpha^{32}$. The calculation follows by a union bound. Observe that even for $\alpha = 30\%$, this is $2^{-32}$, meaning we would need to wait $2^{32}$ epochs, or over $150$ years.
    In fact, no strategy can totally control more than a small fraction of the epochs
    when $\alpha$ is less than a reasonable bound such as $\approx \%25$.
    For a discussion on why, see Appendix~\ref{appendix:takeover}.
    }
\end{itemize}

\noindent\textbf{Manipulating RANDAO.} To build intuition, we first give an example of how and why one might manipulate RANDAO. First, imagine that the Strategic Player proposes in slots $25$ and $30$ during epoch $1$. Observe that RANDAO will get reset to uniformly random during epoch $31$, and the RANDAO value between rounds $25$ and $30$ has no impact on any future proposers. Therefore, the Strategic Player gains nothing by skipping these proposal slots.

On the other hand, imagine that the Strategic Player proposes in slots $31$ and $32$. The Strategic Player knows the RANDAO value going into slot $32$, and therefore has two options to set the RANDAO for epoch $2$. If they choose not to propose in slot $32$, they miss out on a one-slot reward, but perhaps this leaves the RANDAO in a favorable place for epoch $2$ as compared to the RANDAO value if they were to propose. In fact, the Strategic Player knows the RANDAO value going into slot $31$, and has four choices between \{propose twice, propose zero times, propose only in $31$, propose only in $32$\}. Each of these will seed a different set of proposers for epoch $2$, and forego a different number of rewards.

This example helps establish that the Strategic Player never benefits from foregoing a proposal before an Honest slot, but has $2^k$ options for the next epoch's RANDAO when they propose the last $k$ slots. We therefore call the largest number of slots $k$ such that the adversary controls the last $k$ slots of an epoch the \emph{tail} of the epoch. The adversary can influence the next epoch only through these slots and
intuitively these slots represent how much predictive power the adversary
holds for the next epoch.\\

\noindent\textbf{Refining the RANDAO Manipulation Game.} Since the length of the tail fully captures the manipulation power of the Strategic Player, we further analyze this and refine our RANDAO Manipulation Game. We first observe that the length of the tail for a single RANDAO draw is distributed according to a roughly geometric distribution.
A tail of length $t$ occurs if we have $t$ slots at the end of an epoch controlled by the strategic player which happens with probability $\alpha^t$, preceded by a single honest slot
which happens with probability $(1-\alpha)$.
We call the remaining non-tail slots that the adversary controls the \emph{count}. The count follows a binomial distribution conditioned on the length of the tail. Specifically:

\begin{itemize}
  \item $\tdist$ is the distribution of the tail given an adversary with stake
        $\alpha$. It is defined such that for $T \sim \tdist$,
        \[ \Pr(T = t) = \begin{cases} (1-\alpha)\alpha^{t} & 0 \leq t < \ell \\
                          \alpha^{\ell} & t = \ell
                        \end{cases}
        \]
  \item $\cdist(\ell - t - 1, \alpha)$ is the distribution of the remaining
        count (how many slots the adversary gets from the non-tail part of the
        epoch) given that the tail is $t$.
        For $C \sim \cdist(\ell - t - 1, \alpha)$,
        \[ \Pr(C = c) = \begin{cases}
            {\ell - t - 1 \choose c} \alpha^c (1-\alpha)^{\ell - t - 1 - c}
              & 0 \leq c < \ell - t \land 0 \leq t < \ell - 1 \\
            1 & c = 0 \land (t = \ell - 1 \lor t = \ell) \\
            0 & c \neq 0 \land (t = \ell - 1 \lor t = \ell)
        \end{cases}
        \]
  \item $\fdist$ is the distribution of $(C,T)$ where
        we first sample $T \sim \tdist$ and then sample
        $C \sim \cdist(\ell - T - 1, \alpha)$.
\end{itemize}

Given our reasoning above, an optimal Strategic Player will always propose during any of the `count' rounds, and will only manipulate the `tail' rounds. In particular, this means that the Strategic Player need not know the full slate of proposers in an epoch, but \emph{only the count and the tail}. With this in mind, we can now refine our RANDAO Manipulation Game v1 to an equivalent RANDAO Manipulation Game v2.

 Using the definitions above, formally, the RANDAO manipulation game $G$ is:
\begin{definition}[RANDAO Manipulation Game v2]\;
  \begin{enumerate}
    \item Initialize $\textsc{Reward}:=0$, and $\textsc{Rounds}:=0$.
    \item Initialize $(c,t)$ drawn from $\fdist$.
    \item For $i = 0$ to $t$,
          \begin{itemize}
            \item For $j = 1$ to ${t \choose i}$,
                  \begin{itemize}
                    \item Sample $(c_{i,j},t_{i,j})$ from $\fdist$.
                  \end{itemize}
          \end{itemize}
    \item The Strategic Player chooses an $(i^*,j^*)$ pair. %
    \item Update $t := t_{i^*,j^*}$, add $c_{i^*,j^*} + t_{i^*,j^*} - i^*$ to \textsc{Reward} and $\ell$ to $\textsc{Rounds}$.
    \item Repeat from step 3.
  \end{enumerate}

  The final payoff is $\lim\inf\{\textsc{Reward}/\textsc{Rounds}\}$.
\end{definition}

RANDAO Manipulation Game v2 is equivalent to RANDAO Manipulation Game v1, after assuming that the Strategic Player optimally proposes during all non-tail slots. Our method of counting the rewards observes that in the next epoch we will always propose during the `count' rounds (and hence just add them directly to our reward), and miss $i$ tail proposals in order to influence the RANDAO (and hence get only $t_{i,j} - i$ slot rewards from the tail).

  Before we proceed with the analysis, we define two sets of interest.
  Let $\mathcal{O}$ be the set of all possible values of $(C_{i,j} - i, T_{i,j})$
  when $(C_{i,j},T_{i,j})$ gets sampled from $\fdist$ for all $0 \leq i \leq t$,
  $1 \leq j \leq {t \choose i}$ for some current tail $t \in \{0,\dots,\ell\}$.
  Given the range of the tail, count and the epoch length,
  $\mathcal{O} = \{(\omega, t) : t \in [0..\ell]
          \land \omega \in [-\ell..\ell]
          \land ((t = \ell \land \omega \leq 0)
            \lor (t < \ell \land \omega \leq \ell - t - 1))
            \}$.
  Let $\Omega$ be the set of all possible multisets of
  observations. More formally, $\Omega$ is the set of
  all multisets $\{(C_{i,j} - i, T_{i,j}) : 0 \leq i \leq \ell, 1 \leq j \leq {t \choose i}\}$
  for some tail length $t$.
  Note that this makes all observation multisets $\mathit{Obs} \in \Omega$
  have size equal to some power of 2.

\section{MDP formulation} \label{sec:mdp}
We can now directly formulate the RANDAO manipulation game as an MDP given the RANDAO Manipulation Game v2.

\begin{definition}[RANDAO MDP $M_G$]
  The Markov decision process representing the RANDAO manipulation game
  is $M_{G} = (S, A, \{P_{\pi}\}_{\pi}, \{R_{\pi}\}_{\pi})$ where
  \begin{itemize}
    \item $S = \{(t,\mathit{Obs}) : t \in \N, 0 \leq t \leq \ell,
          \mathit{Obs} \in \Omega) \}$.
          Each state represents the length of the tail $t$ and the observations
          available to the adversary corresponding to the RANDAO samples.
    \item The action space $A = \mathcal{O}$, each action is selecting a future state from
          the given observations.
    \item $\pi \in \Pi$ is the policy space where $\Pi$ is
          the set of all functions $\pi : \Omega \to \mathcal{O}$
          such that $\pi(\mathit{Obs}) = (\omega,t) \in \mathit{Obs}$.
          $\pi$ chooses on of the sample in $\mathit{Obs}$ to transition towards.
    \item $P_{\pi}$ and $R_{\pi}$ are determined by the process in the game
          where given a state $(t,\mathit{Obs})$,
          we transition to $(t', \mathit{Obs'})$ such that $\pi(\mathit{Obs}) = (\omega',t')$
          and $\mathit{Obs'}$ consists of $2^{t}$ pairs $(c_{i,j} - i, t_{i,j})$
          each sampled using $\fdist$ as in step 2 of the game. The reward of this transition is
          $\omega' + t'$.
  \end{itemize}
\end{definition}

\begin{lemmarep} \label{lem:mg-ergodic}
    $M_{G}$ is ergodic.
\end{lemmarep}
\begin{proof}
  It suffices to observe for any policy $\pi$,
  we can transition from any state to any other state
  in three steps
  with non-zero probability.
  Consider the Markov chain induced by fixing $\pi$ in MDP $M_{G}$.
  Suppose we are at state $(t, \mathit{Obs})$
  and we consider $(t',\mathit{Obs'})$. Let $t^{*} = \log(|\mathit{Obs'}|)$.
  We then observe that the following sequence of transitions have non-zero
  probability:
  \[
    (t, \mathit{Obs})
    \to
    (t_{1},\mathit{Obs}_{1})
    \to
    (t^{*}, \mathit{Obs}_{2})
    \to
    (t', \mathit{Obs'})
  \]
  where
  $\mathit{Obs}_{1}$ is the set of observations where all have tail equal to $t^{*}$
  and
  $\mathit{Obs}_{2}$ is the set of observations where all have tail equal to $t'$.
\end{proof}

It is fairly straight-forward to see that this MDP formulation accurately captures the RANDAO Manipulation Game v2 -- the state captures the point in time after drawing $(c_{i,j}, t_{i,j})$ for all $(i,j)$, and our only action at this point is to choose one such $(i,j)$ and transition, getting reward $\textsc{Reward}$. Note also that every state transition corresponds to exactly an increase of $\ell$ in $\textsc{Rounds}$, so the time-averaged reward in this MDP is exactly the payoff in RANDAO Manipulation Game v2.

Unfortunately, the state space of this MDP is enormous, and we have absolutely no hope of even writing it down (let alone solving it). Luckily since our MDP is ergodic, we can exploit the structure of the optimal policy and drastically simplify the state space.

\paragraph*{Reducing the state space}
We now refine our formulation of the RANDAO manipulation MDP to make it tractable.
We know that for each policy $\pi$, there exists a valuation $v_{\pi} : \{0, \dots, \ell\} \to \R$,
the \emph{average adjusted sum of rewards}. We also know from the Bellman optimality equation
that the optimal policy will take the action maximizing the immediate reward plus the weighted
sum of potential future states with their values. Hence, any plausibly optimal policy, given the
set of samples $\{(c_{i,j},t_{i,j}) : 0 \leq i \leq t, 1 \leq j \leq {t \choose i}\}$, will simply choose the one that maximizes $c_{i,j} - i + t_{i,j} + v_{\pi}(t_{i,j})$. Motivated by this observation we reformulate the RANDAO MDP $M_G$ as the
following reduced state space MDP $M'_{G}$.

Below, intuitively we no longer need to explicitly store all (count, tail) options, because any optimal policy can be fully specified by assigning a value to the tail. So our new state space is simply the tail, but it is now more complex to iterate a transition.

\begin{definition}[RANDAO MDP $M'_{G}$]
  The reduced Markov decision process representing the RANDAO manipulation game
  is $M'_{G} = (S, \Pi, \{P_{\pi}\}_{\pi}, \{R_{\pi}\}_{\pi})$ where
  \begin{itemize}
    \item $S = \{t \in \N : 0 \leq t \leq \ell \}$. Each state represents the
          length of the tail.
    \item $\Pi$ is the policy space where $\Pi$ is the set of all total orders on
          $\mathcal{O}$.
    \item We treat the action space as the same as the policy space. In other words,
          we only consider constant strategies that pick a total order on $\mathcal{O}$.
    \item $P_{\pi}$ and $R_{\pi}$ are determined as follows. Follow RANDAO Manipulation Game v2 in Steps 3-4 (drawing several $(c,t)$s and choosing one), where in Step 4 we choose the future state that is earliest in the total order according to $\pi$. We then transition according to the selected $t$ (this defines $P_{\pi}$) and accumulate reward according to $c+t-i$ (this defines $R_{\pi}$).\footnote{In Section~\ref{sec:eval}, we
          explicitly describe how these transition probabilities and rewards
          are computed.}
  \end{itemize}
\end{definition}

Intuitively, the key difference between $M_G$ and $M'_G$ is at which point in the process we pause and determine a state. In $M_G$, we pause after seeing a large set of (count, tail) pairs and declare this a state. We then make a very simple decision (pick a pair), a very simple reward update (plus count, plus tail, minus number of missed slots), and a fairly simple state transition (draw the new collection of pairs from a known distribution based on the chosen tail).

In $M'_G$, we instead pause after selecting the tail, and declare this a state. We then make a complex decision (pick a total ordering over all plausible pairs), a complex and randomized reward update (sample the set of pairs according to the known distribution based on the state, pick the highest in the total order, and take the reward), and a complex state transition (sample the set of pairs according to the known distribution based on the state, pick the highest in the total order, and take the reward). That is, $M_G$ has a very complicated state space but simple transitions, whereas $M'_G$ has a very complicated action space but simple states. Moreover, we use the Bellman optimality principle to narrow down the plausibly optimal actions for consideration in $M'_G$. We now proceed to establish their equivalence formally.

\begin{lemmarep}
    $M'_{G}$ is ergodic.
\end{lemmarep}
\begin{proof}
  It suffices to observe that under any policy $\pi \in \Pi$ the transition
  probability between any pair of states is positive.
  Suppose that we are running policy $\pi$ and are at state $t \in S$
  and we consider destination tail $t' \in S$.
  Now, if all $2^t$ sampled future states have tail $t'$,
  the next state is $t'$. Since this happens with small but non-zero probability,
  ergodicity holds.
\end{proof}

While we are no longer explicitly keeping track of individual observations in the state,
the optimal policy for $M'_{G}$ still achieves reward equal to the
optimal reward in $M_{G}$ (in expectation).
\begin{proposition}
  If $\pi \in \Pi$ is an optimal policy for $M_{G}$ and
  $\pi' \in \Pi'$ is an optimal policy for $M'_{G}$,
  then $\Gamma_{\pi}(M_{G}) = \Gamma_{\pi'}(M'_{G})$.
\end{proposition}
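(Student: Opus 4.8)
The plan is to show two inequalities: $\Gamma_{\pi}(M_G) \le \Gamma_{\pi'}(M'_G)$ and $\Gamma_{\pi'}(M'_G) \le \Gamma_{\pi}(M_G)$, by arguing that any policy in one MDP can be ``simulated'' by a policy in the other with at least as good average reward, and then invoking optimality of $\pi$ and $\pi'$ respectively together with the fact (from the Bellman optimality setup in Section~\ref{sec:background}) that average reward is well-defined and initial-state independent for both, since both are ergodic (Lemmas above). Concretely, I would first set up a correspondence between trajectories: a run of $M_G$ visits states $(t, \mathit{Obs})$, and the $t$-coordinate of this sequence is exactly a run of $M'_G$, provided the policies agree on which sample to pick. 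The reward per transition is $\omega' + t'$ in both formulations (in $M'_G$ the reward $c+t-i$ is literally $\omega'+t'$ with $\omega' = c - i$), and each transition advances $\textsc{Rounds}$ by $\ell$ in both, so the time-averaged rewards coincide along corresponding trajectories.

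For the direction $\Gamma_{\pi'}(M'_G) \le \Gamma_\pi(M_G)$: given the optimal total order $\pi' \in \Pi'$ for $M'_G$, define a policy $\hat\pi$ for $M_G$ that, at state $(t,\mathit{Obs})$, picks the element of $\mathit{Obs}$ that is maximal according to the total order $\pi'$. Under $\hat\pi$, the induced Markov chain on the $t$-coordinate has exactly the same transition probabilities and expected per-step rewards as $M'_G$ under $\pi'$ (both sample $2^t$ pairs from $\fdist$ and select the $\pi'$-maximal one), so $\Gamma_{\hat\pi}(M_G) = \Gamma_{\pi'}(M'_G)$, and since $\pi$ is optimal for $M_G$, $\Gamma_\pi(M_G) \ge \Gamma_{\hat\pi}(M_G) = \Gamma_{\pi'}(M'_G)$.

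For the reverse direction $\Gamma_\pi(M_G) \le \Gamma_{\pi'}(M'_G)$: here is the step I expect to be the main obstacle, since an arbitrary optimal policy $\pi$ for $M_G$ is a function of the full observation multiset, not obviously expressible as a total order on $\mathcal{O}$. The key is the Bellman optimality equation (Theorem on average-reward MDPs) applied to $M_G$: an optimal policy, at state $(t,\mathit{Obs})$, must select the action $(\omega',t') \in \mathit{Obs}$ maximizing $q + \sum P v_\pi = (\omega' + t') + v_\pi(t')$, where $v_\pi$ depends only on the tail (the value function on $M_G$ factors through the $t$-coordinate because the transition distribution out of $(t,\mathit{Obs})$ depends only on $t$ — I would state and use this as a small lemma). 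Therefore there is an optimal policy for $M_G$ that ranks the pairs $(\omega,t)$ by the key $\omega + t + v_\pi(t)$, ties broken arbitrarily; this ranking is exactly a total order on $\mathcal{O}$, i.e.\ an element of $\Pi'$. Feeding that total order into $M'_G$ reproduces the same induced chain on tails and the same expected rewards, so $\Gamma_{\pi'}(M'_G) \ge$ (reward of this total order in $M'_G$) $= \Gamma_\pi(M_G)$. Combining the two inequalities gives equality. The only care needed is handling the case where the Bellman-optimal action is not unique and making sure a consistent global total order can be chosen (break ties by any fixed rule on $\mathcal{O}$), and noting that $M'_G$ restricts attention to total-order policies without loss because $M_G$'s optimum is attained by one.
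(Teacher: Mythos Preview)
Your proposal is correct and follows essentially the same approach as the paper's proof: invoke the Bellman optimality equation to conclude that an optimal policy for $M_G$ selects the observation maximizing $\omega + t + v_\pi(t)$, which defines a total order on $\mathcal{O}$ and hence a policy in $\Pi'$ with the same average reward, while the reverse inequality comes from lifting a total order back to $M_G$. If anything, your write-up handles both directions more explicitly than the paper's own argument; the one small imprecision is that $v_\pi$ on $M_G$ does not literally factor through the $t$-coordinate (it depends on $\mathit{Obs}$), but what you actually need---that $\sum_{s'}P_a(s,s')v_\pi(s')$ depends only on the chosen $t'$---is exactly what the paper uses and is what your ``small lemma'' should state.
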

\begin{proof}
  Suppose $\pi \in \Pi$ is an optimal policy for $M_G$
  and $\pi' \in \Pi'$ is an optimal policy for $M_G'$.
  We first show that given $\pi \in \Pi$ for $M_G$,
  there exists a corresponding policy $\pi^* \in \Pi'$ for $M_G'$
  such that $\Gamma_{\pi}(M_{G}) = \Gamma_{\pi^*}(M'_{G})$.
  Since every optimal policy attains the same expected average reward,
  without loss of generality, assume that $\pi$ satisfies the
  Bellman optimality equation.
  As a consequence, $\pi$ selects the observation $(\omega,t)$ that
  maximizes $\omega + t + v_\pi(t)$.
  This precisely defines a total order on $(\omega,t)$ as $v_\pi$ is fixed.
  Let $\pi^*$ be the total order defined by maximizing $\omega + t + v_\pi(t)$.
  As both $M_G$ and $M_G'$ sample from the same distributions the same number of times,
  $\Gamma_{\pi}(M_{G}) = \Gamma_{\pi^*}(M'_{G})$.

  Hence, we know that given $\pi^*_0$,
  there exists a corresponding policy $\pi^{*'}_0$ playing
  $M_G'$ such that $\Gamma_{\pi^{*}_0}(M_{G}) \leq \Gamma_{\pi^{*'}_0}(M'_{G})$.
  By the optimality of $\pi^*_1$, we also know that
  $\Gamma_{\pi^{*'}_0}(M'_{G}) \leq \Gamma_{\pi^{*}_1}(M'_{G})$.
  Therefore, $\Gamma_{\pi^*_0}(M_{G}) = \Gamma_{\pi^{*}_1}(M'_{G})$
  and the claim holds.
\end{proof}

\section{Evaluating policies} \label{sec:eval}
In this section, we first analyze the policy that only cares about
maximizing the tail length ($\TM$) as an instructive example.
Intuitively, this policy can be implemented by (a) for each
subset of slots that the Strategic Player can choose to propose, computing
the resulting RANDAO value and
hence the next epoch proposer assignments, and (b) the Strategic Player
choosing to propose in the subset of slots that results in the longest tail
in the next epoch.
Subsequently, we describe how to evaluate arbitrary policies in our
Markov decision process $M'_{G}$ formulation of the RANDAO manipulation game.

\subsection{Analyzing the \textsc{Tail-max} policy} \label{sub:tailmax}

The $\TM$ policy can be defined as the policy $\pi$ that given the current
state $t$ and $2^{t}$ samples $(C_{i,j},T_{i,j}) \sim \fdist$, picks $(i,j)$
that maximizes $T_{i,j}$. Note that in case of ties, we pick the transition with
higher reward (i.e. breaking ties in favor of higher $C_{i,j} - i$).

To analyze $\TM$, we are interested in computing the following quantities.
\begin{itemize}
  \item $P_{\tail}(t,t')$ : the probability of transitioning from state $t$ to
        $t'$ when running game $G$ with $\TM$.
  \item $R_{\tail}(t)$ : the expected reward of transitioning from state $t$
        when running game $G$ with $\TM$.
\end{itemize}
Let $\maxT$ be a random variable representing the maximum sampled tail in the current
round of the game. More formally, it is defined as the following:
\[
  \maxT :=
  \max_{\substack{(C_{i,j},T_{i,j}) \sim \fdist \\
   0 \leq i \leq t \\
   1 \leq j \leq {t \choose i}}}
  \{T_{i,j}\}
\]
which is identically distributed as
$\max_{\substack{T_{i} \sim \tdist \\ 0 \leq i \leq 2^{t}-1}}\{T_{i}\}$.
Then, we have the following probabilities.
\begin{align*}
  P_{\tail}(t,t')
  &= \Pr(\maxT = t') \\
  &= \begin{cases}
       \Pr(\maxT \leq t' ) - \Pr(\maxT \leq t' - 1 )  & 0 < t' \leq \ell \\
       \Pr(\maxT \leq t' ) & t' = 0
  \end{cases}
\end{align*}
where we use the following:
\begin{lemmarep}
    $
    \Pr(\maxT \leq T')
    = \begin{cases}
       \left(1-\alpha^{t'+1}\right)^{2^{t}} & 0 \leq t' < \ell \\
       1 & t' = \ell
     \end{cases}
    $
\end{lemmarep}
\begin{proof}
Using the fact that each $\{T_{i}\}$ are i.i.d.,
\begin{align*}
  \Pr(\maxT \leq T')
  &= \Pr_{\substack{T_{i} \sim \tdist \\ 0 \leq i < 2^{t}}} \left( \bigwedge_{0 \leq i < 2^{t}}
       (T_{i} \leq t') \right)\\
  &= \prod_{0 \leq i < 2^{t}} \Pr_{T_{i}\sim\tdist}(T_{i} \leq t') \\
  &= \left( \Pr_{T'' \sim \tdist}(T'' \leq t') \right)^{2^{t}} \\
  &= \begin{cases}
       \left(1-\alpha^{t'+1}\right)^{2^{t}} & 0 \leq t' < \ell \\
       1 & t' = \ell
     \end{cases}
\end{align*}
\end{proof}

The transition reward can be computed in a similar way. Let
$\preceq \in \Pi$ of $M'_{G}$ be defined as $(t,v) \preceq (t',v')$ if and only if $t < t'$ or $(t =
t' \land v \leq v')$. Intuitively, these pairs represent choices that a policy
can make where $t$ is the tail and $v$ is the amount of reward we get from the
rest of the count. The $\TM$ policy picks the maximum such pair according to
$\preceq$. Equality and strict ordering are defined in the usual way. Also let
$\prev_{\preceq}(v,t)$ be defined as the previous pair in the ordering $\preceq$
if it exists and $\bot$ otherwise. Note that $\Pr(\cdot \preceq \bot)$ is
interpreted as 0.

Let $\maxV$ be a random variable representing the maximum tail, and the non-tail reward pair in the current round of the game according to the total order $\preceq$.
More formally, it is defined as the following:
\[
  \maxV :=
  \max^{\preceq}_{\substack{(C_{i,j},T_{i,j}) \sim \fdist \\
                  0 \leq i \leq t \\
                  1 \leq j \leq {t \choose i}}}
  \{(T_{i,j}, C_{i,j} - i)\}
\]
Note that $\preceq$ is a total order and we have the property that $\maxV
\preceq (a,b)$ if and only if $(T_{i,j}, C_{i,j} - i) \preceq (a,b)$ for all $0
\leq i \leq t, 0 \leq j \leq {t \choose i}$. Therefore,
\begin{align*}
  R_{\tail}(t)
  &= \E\left[T' + V' \mid \maxV = (T',V') \right] \\
  &= \sum_{\substack{0 \leq t' \leq \ell \\ -t \leq v' \leq \ell-t'}}
        \Pr(\maxV = (t',v'))(t' + v') \\
  &= \sum_{\substack{0 \leq t' \leq \ell \\ -t \leq v' \leq \ell-t'}}
   (\Pr(\maxV \preceq (t',v))
    - \Pr(\maxV \preceq \prev_{\preceq}(t',v)))(t' + v)
\end{align*}
where we can use the following:
\begin{lemmarep}
Let $\preceq \in \Pi$ of $M'_G$ be the $\TM$ policy.
For $f(x) := \Pr_{T \sim \tdist}(T < x)$,
$g(x) := \Pr_{T \sim \tdist}(T = x)$,
and $h(x,y) := \Pr_{C\sim\text{Binom}(\ell-x-1,\alpha)}(C \leq y)$,
    \begin{align*}
    \Pr(\maxV \preceq (t',v)) &= \prod_{0 \leq i \leq t}
    \left(\Pr_{(C,T)\sim\fdist}((T, C - i)
    \preceq (t',v))\right)^{t \choose i} \\
    \text{and}\qquad\qquad\qquad\qquad\qquad& \\
    \Pr_{(C,T)\sim\fdist}((T, C - i) \preceq (t',v))
    &= \begin{cases}
      f(t') & t' = \ell \land 0 > v' + i \\
      f(t') + g(\ell) & t' = \ell \land 0 \leq v' + i \\
      f(t') + g(t')h(t',v'+i) & \text{otherwise}
    \end{cases}
    \end{align*}
\end{lemmarep}
\begin{proof}
Using standard properties, we observe that
\begin{align*}
  \Pr(\maxV \preceq (t',v))
  &= \Pr_{\substack{(C_{i,j},T_{i,j})\sim\fdist\\
                              0 \leq i \leq t \\
                              1 \leq j \leq {t \choose i}}}
  \left(
        \bigwedge_{\substack{ 0 \leq i \leq t \\
                              1 \leq j \leq {t \choose i}}}
    ((T_{i,j}, C_{i,j} - i) \preceq (t',v)) \right) \\
  &= \prod_{\substack{0 \leq i \leq t \\0 \leq j \leq {t \choose i}}}
    \Pr_{(C,T)\sim\fdist}((T, C - i) \preceq (t',v)) \\
  &= \prod_{0 \leq i \leq t}
    \left(\Pr_{(C,T)\sim\fdist}((T, C - i)
    \preceq (t',v))\right)^{t \choose i}
\end{align*}
and
\begin{align*}
  \Pr&_{(C,T)\sim\fdist}((T, C - i) \preceq (t',v)) \\
  &= \Pr_{(C,T)\sim\fdist}(T < t'
        \lor (T = t' \land C - i \leq v')) \\
  &= \Pr_{T \sim \tdist}(T < t')
    + \Pr_{(C,T)\sim\fdist}(T = t' \land C \leq v'+i) \\
  &= \Pr_{T \sim \tdist}(T < t')
    + \Pr_{T\sim\tdist}(T=t')
      \Pr_{(C,T)\sim\fdist}(C \leq v'+i \mid T = t') \\
  &= \Pr_{T \sim \tdist}(T < t')
    \\ &\quad+ \begin{cases}
      0 & t' = \ell \land 0 > v' + i \\
      \Pr_{T\sim\tdist}(T=\ell) & t' = \ell \land 0 \leq v' + i \\
      \Pr_{T\sim\tdist}(T=t')
      \Pr_{C\sim\text{Binom}(\ell-t'-1,\alpha)}(C \leq v'+i)
      & \text{otherwise}
    \end{cases}
\end{align*}
\end{proof}

We can now compute the stationary distribution
in order to directly compute average reward of the $\TM$ policy. This is a
lower bound to the optimal reward ratio we can obtain from this game.\footnote{Note that $\TM$ does not necessarily outperform Honest -- it could be that in an attempt to increase the tail by one, $\TM$ misses several proposal slots, and yet also does not take good advantage of the increased tail.
However, our results show that $\TM$ does outperform the honest strategy for all $\alpha$.
See Figure~\ref{fig:opt-tm-reward} and \ref{fig:opt-tm-advantage} for a comparison with Honest and the optimal policy. The average tail length of $\TM$, however, serves as an upper bound on the average tail length of any feasible strategy, including the optimum.}

\begin{figure}[H]
  \centering
  \includegraphics[width=\textwidth]{./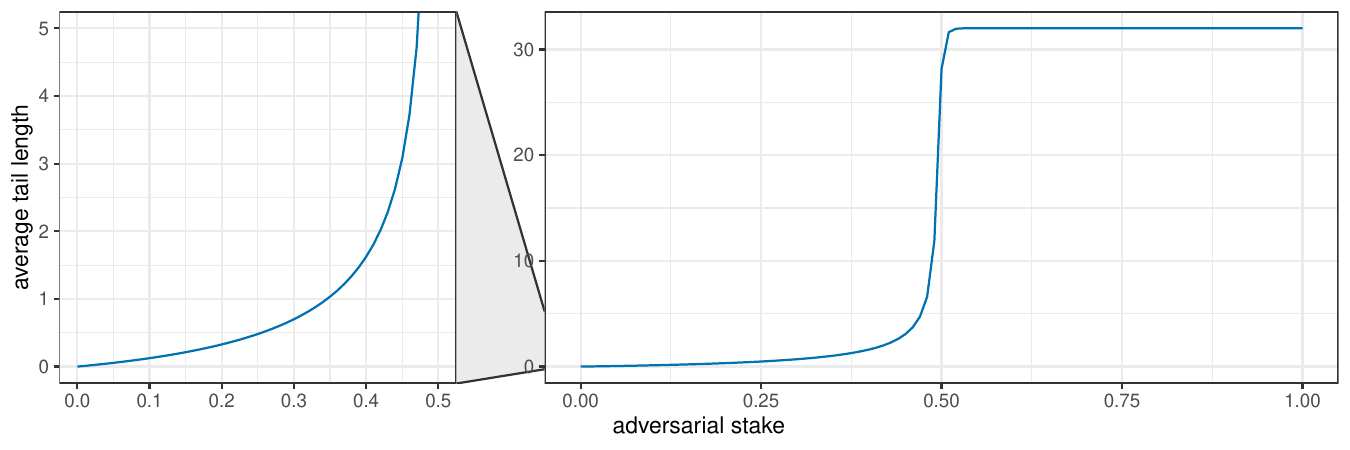}
  \caption{Average tail length attained for each $\alpha$ when running $\TM$. The adversary
  controls a larger tail value as $\alpha$ rises as expected. There is a quick jump
  when approaching $\alpha = 50\%$, indicating that the adversary can
  propose almost all blocks.}
\end{figure}

\subsection{Policy evaluation in the general case}
We proceed similar to the $\TM$ analysis.
Consider policy $\preceq \;\in \Pi$ so it is some total order on $\mathcal{O}$.
Recall that $\maxV$ is a random variable defined (in Subsection~\ref{sub:tailmax})
to be the maximum (tail, non-tail reward) pair given that
the adversary currently controls a tail of length $t$.
Similar to the analysis of $\TM$, we then have
\begin{align*}
  P_{\preceq}(t,t')
  &= \sum_{-\ell \leq v \leq \ell - t'} \Pr(\maxV = (t',v)) \\
  R_{\preceq}(t)
  &= \sum_{\substack{0 \leq t' \leq \ell \\ -t \leq v' \leq \ell-t'}}\Pr(\maxV = (t',v))(t' + v)
\end{align*}
Now, it suffices to describe how to compute the CDF of $\maxV$ since
\[
  \Pr(\maxV = (t',v))
  = \Pr(\maxV \preceq (t',v)) - \Pr(\maxV \preceq \prev_{\preceq}(t',v))
\]

\begin{lemmarep} \label{lem:maxpair-cdf}
Let $\preceq \in \Pi$ be an arbitrary policy in $M'_G$.
    \[\Pr(\maxV \preceq (v,t')) =
    \prod_{0 \leq i \leq t} \left(
    \sum_{\forall (t^{*},v^{*}) \preceq (t',v)} \Pr_{(C,T) \sim \fdist} \left( (T, C) = (t^{*},v^{*} + i) \right)
    \right)^{{t \choose i}}\]
\end{lemmarep}
\begin{proof}
Using standard properties of independent samples, we observe that:
\begin{align*}
  \Pr(\maxV \preceq (v,t'))
  &= \Pr_{\substack{(C_{i,j},T_{i,j}) \sim \fdist \\ 0 \leq i \leq t\\1 \leq j \leq {t \choose i}}}
        \left(\bigwedge_{\substack{0 \leq i \leq t\\1 \leq j \leq {t \choose i}}}
        ((T_{i,j}, C_{i,j} - i) \preceq (t',v) ) \right) \\
  &= \prod_{\substack{0 \leq i \leq t\\1 \leq j \leq {t \choose i}}}
        \Pr_{(C,T) \sim \fdist} \left( (T, C - i) \preceq (t',v) \right) \\
  &= \prod_{\substack{0 \leq i \leq t\\1 \leq j \leq {t \choose i}}}
        \sum_{\forall (t^{*},v^{*}) \preceq (t',v)} \Pr_{(C,T) \sim \fdist} \left( (T, C - i) = (t^{*},v^{*}) \right) \\
  &= \prod_{0 \leq i \leq t} \left(
    \sum_{\forall (t^{*},v^{*}) \preceq (t',v)} \Pr_{(C,T) \sim \fdist} \left( (T, C) = (t^{*},v^{*} + i) \right)
    \right)^{{t \choose i}}
\end{align*}
\end{proof}
Therefore we can also evaluate arbitrary policies by computing the quantities above.
\section{Solving for optimal strategies} \label{sec:solve-opt}
We can now run policy iteration \cite{howard1960dynamic} in our policy
space.
\begin{enumerate}
  \item Start with an arbitrary policy $\preceq \;\in \Pi$.
  \item Compute $P_{\preceq}$ and $R_{\preceq}$.
  \item Compute average reward $\Gamma_\preceq$ using $R_{\preceq}(t)$ and
  the stationary distribution of $P_{\preceq}$.
  \item Determine $v_{t}$ for each $t \in \{0,\dots,\ell\}$
        by setting $v_{0} = 0$ and solving the system of linear equations given
        by $
            \Gamma_\preceq + v_\preceq(t) = R_{\preceq}(t) + \sum_{t' = 0}^{\ell} P_{\preceq}(t,t')v_{t'}
            \qquad \text{ for }
            t \in \{0,\dots,\ell\}
           $.
  \item Let $\preceq'$ be a new total order constructed by sorting each $(\omega,t)$
        pair by the quantities $\omega + t + v_{t}$ in ascending order.
  \item If $\preceq$ is equal to $\preceq'$, we converged and $\preceq$ is optimal.
        Otherwise let $\preceq \;:=\; \preceq'$ and repeat from step 2.
\end{enumerate}
Note that this is equivalent to policy iteration as described in Section~\ref{sec:background} since
sorting by immediate reward plus future state value to get new policy $\preceq$ is equivalent to
picking the maximum at each state.

For $\ell = 32$ and all $\alpha$, policy iteration always converged in less than 10 steps. This enables us to plot the following results on optimal manipulations for RANDAO:
\begin{figure}[H]
  \centering
  \includegraphics[width=\textwidth]{./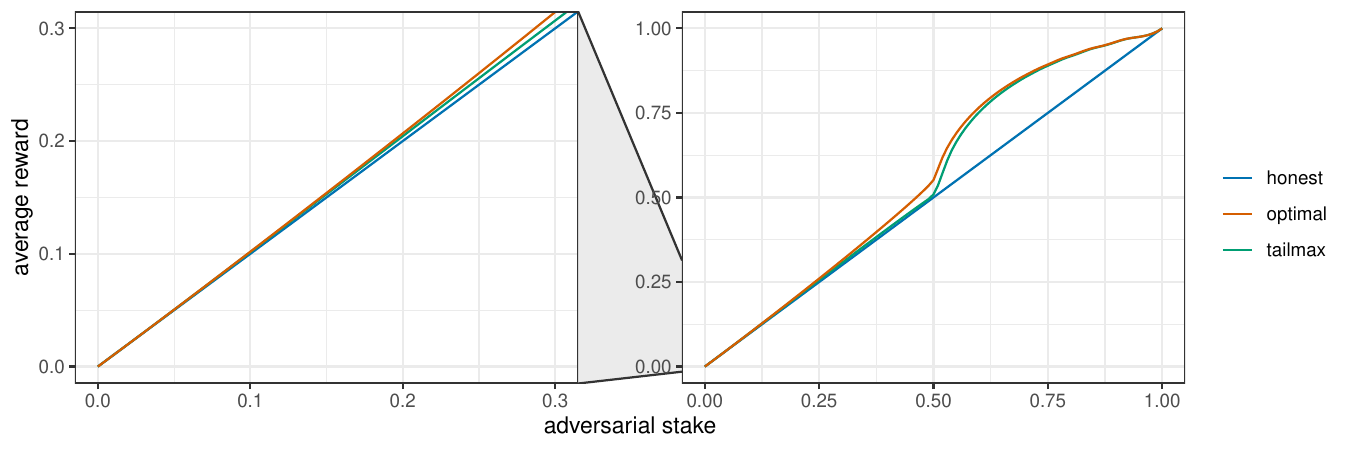}
  \caption{Average reward of the optimal policy and $\TM$ for $\ell = 32$. The figure on the left shows the
  $0 < \alpha \leq 0.3$ range, the figure on the right show the entire range of $0 < \alpha < 1$.}
  \label{fig:opt-tm-reward}
\end{figure}

\begin{figure}[H]
  \centering
  \includegraphics[width=\textwidth]{./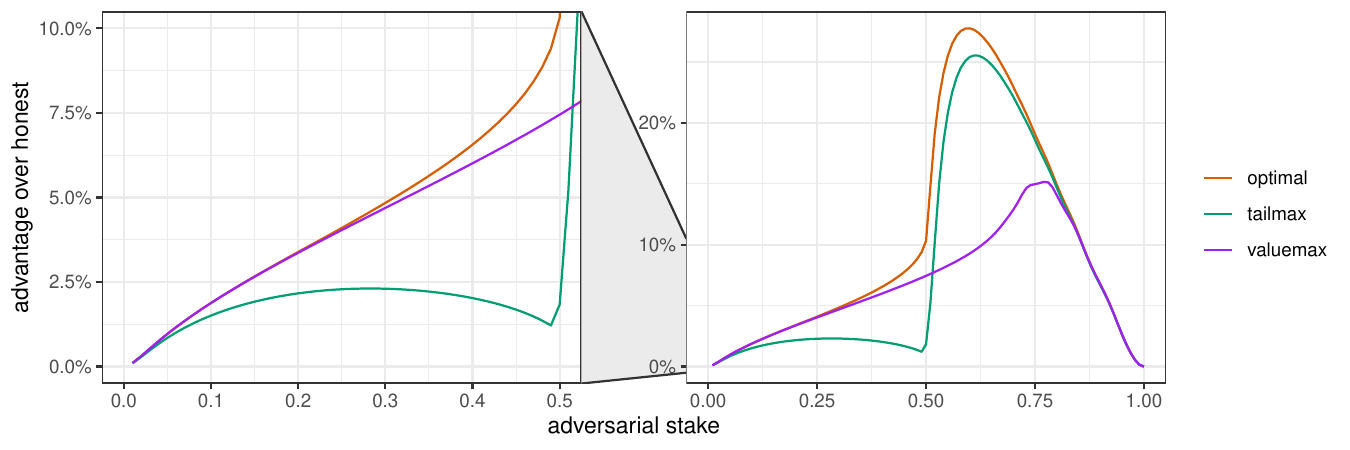}
  \caption{Percentage improvement of the optimal policy and $\TM$ over the honest policy for $\ell = 32$.
  Improvement is defined as $(\text{policy average reward})/(\text{honest average reward}) - 1$.
  We also analyze the strategy \textsc{Value-max} here which we define as the
  strategy that maximizes the reward in the next epoch (chooses the pair that maximizes
  $c_{i,j} + t_{i,j} - i$).}
  \label{fig:opt-tm-advantage}
\end{figure}

\begin{table}[H]
    \centering
    \begin{tabular}{l@{\hspace{3cm}}r}
    \toprule
       $\alpha$ & optimal reward \\ \midrule
        $1\%$  & $1.00107\%$ \\
        $5\%$  & $5.04834\%$ \\
        $10\%$ & $10.18807\%$ \\
        $15\%$ & $15.39960\%$ \\
        $20\%$ & $20.67770\%$ \\
        $25\%$ & $26.02472\%$ \\
        $30\%$ & $31.45164\%$ \\
        $35\%$ & $36.97348\%$ \\
        $40\%$ & $42.62435\%$ \\
        $45\%$ & $48.49184\%$ \\
    \bottomrule \\
    \end{tabular}
    \caption{Average reward of the optimal policy. In expectation, the honest reward is equal to
            $\alpha$. We see in the table that the optimal policy is strictly more profitable.}
    \label{tab:optimal-reward}
\end{table}

A key strength of our methodology is the fact that it is constructive, we
explicitly construct the optimal policy for each $\alpha$.
In order to implement the optimal policy we compute, the values
can be used as in step 5 of the policy iteration routine described above.
For instance, for $\alpha = 0.2$, the optimal policy assigns the
the following values to tails of length 0 to 32 (rounded to two decimal points):
\begin{align*}
(
&0.00, \; 0.90, \; 1.66, \; 2.30, \; 2.86, \; 3.35, \; 3.79, \; 4.19, \; 4.55, \; 4.89, \; 5.21, \; \\ &5.50, \; 5.78, \; 6.05, \; 6.30, \; 6.55, \; 6.78, \; 7.00, \; 7.22, \; 7.43, \; 7.63, \; 7.82, \; \\ &8.01, \; 8.19, \; 8.37, \; 8.54, \; 8.71, \; 8.87, \; 9.03, \; 9.19, \; 9.34, \; 9.49, \; 9.64
)
\end{align*}

\paragraph*{Considering $\ell \neq 32$} \label{sec:numeric}

One benefit of our approach is that it trivially extends to any $\ell$. This allows one to easily answer, for example, whether RANDAO would be more or less manipulable with different epoch lengths and by how much.

Policy iteration runs smoothly for $\ell = 32$ and smaller. However, when we consider $\ell = 64$ or $128$, numerical instability becomes a concern, and our experiments are no longer \emph{provably} accurate. In particular, 64-bit floats we use in our machine introduce precision error that explode when evaluating the expression in Lemma~\ref{lem:maxpair-cdf} with $\ell > 32$.

In order to improve the numerical stability of the expression in
Lemma~\ref{lem:maxpair-cdf}, when $\ell > 32$ we evaluate the inner sum directly to 1 instead of taking the exponential when the sum reaches within $10^{-14}$
of 1 since the error $\epsilon$ introduced by the floating point representation cause $(1 \pm \epsilon)^N$ to become 0 or a large constant for $N \gg 1$. The following figure shows running our evaluation for different $\ell$, using this modification.

\begin{figure}[H]
  \centering
  \includegraphics[width=\textwidth]{./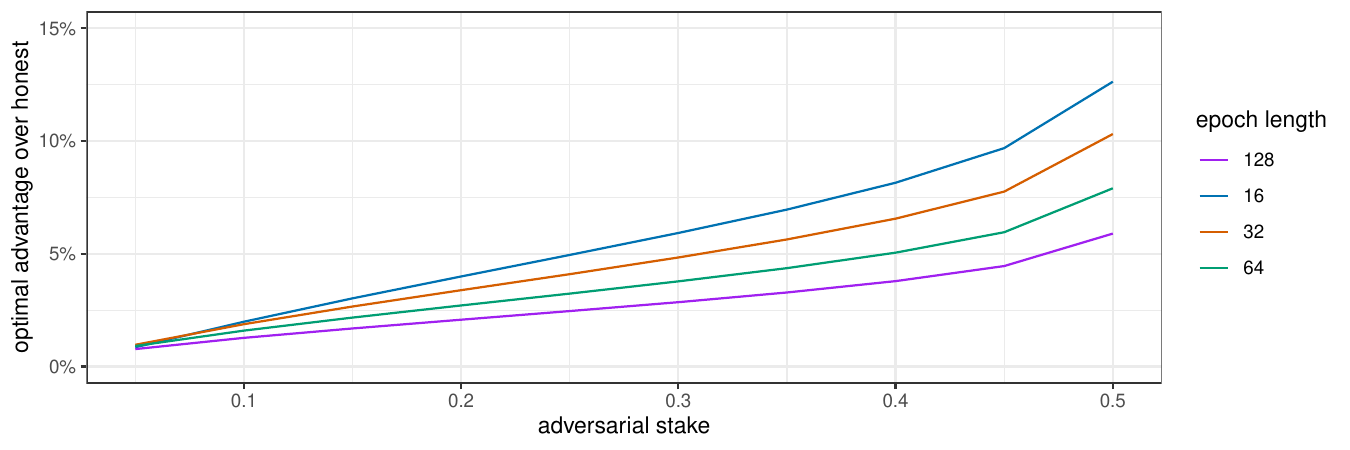}
  \caption{Percentage improvement over honest for $\ell \in \{16,32,64,128\}$.
  Improvement is defined as $(\text{optimal average reward})/(\text{honest average reward}) - 1$.}
\end{figure}

We conjecture that this plot is representative of how the results scale with $\ell$, although unlike our main results the experiments are not provably accurate due to the aforementioned numerical instability. If one desires provable numerical guarantees on these quantities, one would need an analysis of numerical error induced by floating point representations of the machines that run the evaluation.

\section{Discussion} \label{sec:discussion} \nosectionappendix
We model optimal RANDAO manipulation in Proof-of-Stake Ethereum and frame it as an MDP. Our main modeling contribution is getting from RANDAO manipulation in practice to RANDAO Manipulation Game v2, and our key technical insight is getting from there to the reduced RANDAO MDP $M'_G$. From here, simple policy iteration on a laptop suffices to analyze the optimal strategy. Our main results shed light on exactly how manipulable Ethereum's RANDAO is. One could compare our results, for example, to those of~\cite{FerreiraGHHWY24} for Proof-of-Stake protocols based on cryptographic self-selection. For example,~\cite{FerreiraGHHWY24} establishes that a well-connected Strategic Player with $10\%$ of the stake can propose between $10.08\%$ an $10.15\%$ of the rounds in cryptographic self-selection protocols, and our work establishes that a Strategic Player with $10\%$ of the stake can propose a $10.19\%$ fraction of rounds in Ethereum Proof-of-Stake.
While our work introduces methodology to compute these numbers, we leave \emph{interpretation} of their significance to the Ethereum community since many different factors
come into play when designing the consensus mechanism.

A clear direction for future work would be to consider the impact of slot-varying rewards as in~\cite{CarlstenKWN16}. This will clearly increase the manipulability (as now the Strategic Player can use the value of a slot when deciding whether to propose), but it is not obvious by how much. A second direction would be to consider the impact of idiosyncratic details such as Ethereum's sync committees (extra rewards every 256 blocks).

Lastly, we briefly discuss the empirical signature of randomness manipulation.
The results immediately lead to the following question: are there any entities currently
manipulating the RANDAO value?
The signature of such an attack would affect the block miss rates especially around
the tail. Some prior analyses suggest that while there has been
ample opportunities that would result in short term gains for
certain entities, none have been observed to manipulate RANDAO \cite{ethresearchblog}.
In the figure below, the block miss rates by epoch slot index is
displayed from epoch $146876$ to $272341$.
\begin{figure}[H]
  \centering
  \includegraphics[width=\textwidth]{./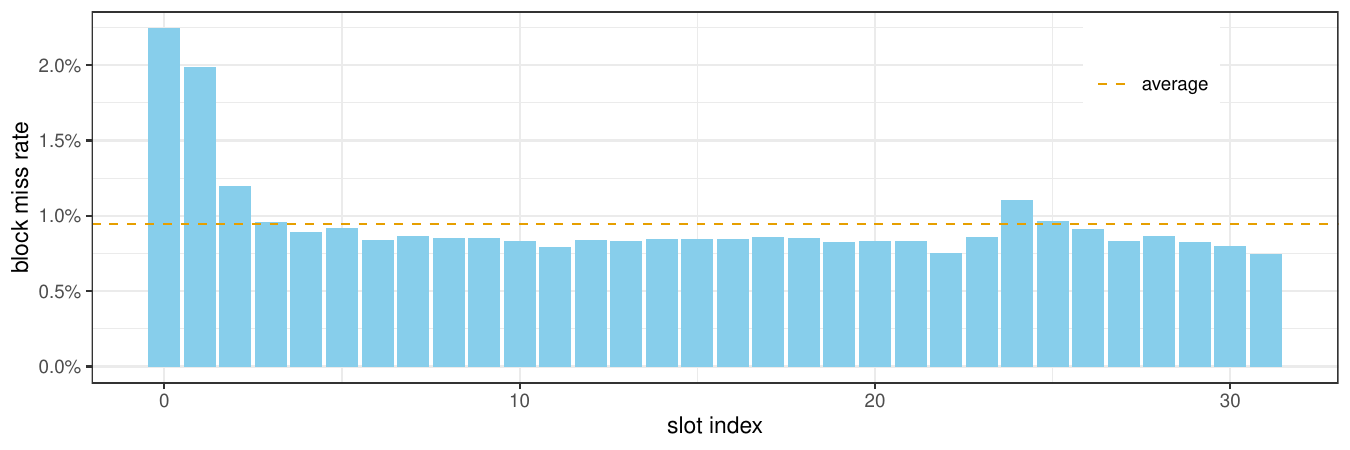}
  \caption{Block miss rate by slot index from epoch $146876$ to $272341$.
  The average block miss rate is $0.9482\%$.}
\end{figure}
Our interpretation of this data is that
slot index 0, 1, and 2 is missed frequently since validators have less time to react to their proposer assignments and
slot index 24 and 25 is missed with slightly higher frequency than the baseline due to votes
crossing the $2/3$ majority.
We do not observe any significant elevation in block miss rates around the tail of the epoch. It is also an interesting direction for future work to examine whether \emph{undetectable} profitable strategies exist for RANDAO manipulation (i.e. strategies that strictly outperform Honest, but produce the same miss rate for all slots).

\bibliography{bib, MasterBib}

\begin{toappendix}
\section{Bounds on RANDAO takeover} \label{appendix:takeover}
In this appendix, we generalize the RANDAO manipulation game to model the RANDAO mechanism
in Ethereum more closely. However, as we later argue, the simpler model already
approximates RANDAO manipulation, and the generalized game essentially
converges to the initial game for our purposes. This motivates our definition in the main body
of the paper further, and shows that modeling additional details of the RANDAO
process that we consider in this section is not necessary.

In Ethereum, we actually have two interleaved instances of the RANDAO manipulation game
where the RANDAO value at the end of epoch $i$ determines proposer assignments for
epoch $i+2$, the RANDAO at the end of epoch $i+1$ determines epoch $i+3$, and so on.
Moreover, to see what could happen in Ethereum but not in the RANDAO manipulation game $G$,
consider the following scenario:
Suppose we reach the end of epoch $i-1$. Now, we know the proposer assignment for epoch's
$i$ and $i+1$. Further assume that as the adversary, we have control over a tail of length
$k$ in epoch $i$, and we control the entirety of epoch $i+1$  (so a tail of length $\ell$.)
Adapting the RANDAO game we presented so far, we would get $2^k$ choices of different
assignments for epoch $i+2$ and \emph{independently} $2^\ell$ choices for assignments
in epoch $i+3$. However, in Ethereum, since we control the entirety of epoch $i+1$,
for \emph{each} $2^k$ choices of assignments for epoch $i+2$, we actually get a new
set of $2^\ell$ samples for epoch $i+3$. Since there is no proposer we don't control in
epoch $i+1$, we can precompute the options we will get at the end of epoch $i+1$
before we decide in epoch $i$.

In fact, as the adversary we can see arbitrarily into the future value of RANDAO
provided that we get lucky and the samples we get retain our full control of successive epochs.
We now model all possible future paths as a tree with each node representing
the current size of the adversarial tails as a pair,
the tails of epoch $i$ and $i+1$.
Each edge is labeled with the reward of transitioning to the new epoch assignments.
Note that in this tree, an edge from $(t,t')$ to $(t'',t''')$ implies that
$t' = t''$ since epoch assignments are interleaved.

We now give the definition of $\Tree(t,t')$:
\begin{definition}[$\Tree(\cdot,\cdot)$]\;
\begin{enumerate}
    \item The root node is $(t,t')$.
    \item If $t' = \ell$, then
        \begin{itemize}
        \item For $i$ from $0$ to $t$, and $j$ from $1$ to ${t \choose i}$,
            sample $(c_{i,j},t_{i,j})$ from $\fdist$.
        \item For each $(i,j)$,
	      compute and add $\Subtree(t',t_{i,j})$ by
	      connecting $(t,t')$ to subtree root $(t',t_{i,j})$
	      labeled with reward $c_{i,j} + t_{i,j} - i$.
        \end{itemize}
    \item If $t' < \ell$, then
    \begin{itemize}
        \item For $i$ from $0$ to $t$, and $j$ from $1$ to ${t \choose i}$,
          sample $(c_{i,j},t_{i,j})$ from $\fdist$.
        \item For each $(i,j)$, add node $(t', t_{i,j})$ and
              connect $(t,t')$ to $(t', t_{i,j})$.
              Label this edge with reward $c_{i,j} + t_{i,j} - i$.
    \end{itemize}
\end{enumerate}
\end{definition}

\noindent where we define $\Subtree(t,t')$ recursively as:
\begin{definition}[$\Subtree(\cdot,\cdot)$]\;
\begin{enumerate}
    \item The root node is $(t,t')$.
    \item If $t < \ell$ and $t' \leq \ell$,
          then $\Subtree(t,t')$ simply has a single node, its root.
    \item If $t = \ell$ and $t' \leq \ell$, then
        sample from $\fdist$ exactly as in step 2 of
                  the definition of $\Tree(t,t')$.
\end{enumerate}
\end{definition}

Then the new game can be defined as:
\begin{definition}[the generalized RANDAO manipulation game $G_r$]\;
  \begin{enumerate}
    \item We start at a pair of tails $(t,t') := (t_{0},t'_0)$.
    \item Recursively generate $\Tree(t,t')$ using the definition above.
    \item Choose a node $(t^*,{t^*}')$ and consider the
          unique $k$-step path from the root of
          $\Tree(t,t')$ to $(t^*,{t^*}')$.
    \item Update the current state $(t,t')$ appropriately in $k$
          rounds, eventually reaching $(t,t') := (t^*,{t^*}')$
          and getting a total reward equal to the sum over the path.
          Repeat from step 2.
  \end{enumerate}
\end{definition}

We observe that the general game is equivalent to two interleaved instances of the initial
game if we restrict to always choosing a node from the first level of the $\Tree(\cdot,\cdot)$.
Therefore we can directly run two interleaved instances of any policy $\pi$
playing the game $G$ in $G_r$. Since the reward structure is also the same at the first level
of the $\Tree$, we get the following corollary immediately as a consequence of the definition.
\begin{corollary}
    A policy $\pi$ for the RANDAO manipulation game $G$
    can be directly adapted into a policy $\pi'$ for the general RANDAO manipulation game
    $G_r$ such that they earn the same average reward.
\end{corollary}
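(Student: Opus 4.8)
The plan is to exhibit the adaptation explicitly and then verify it preserves the time‑averaged reward. Given a policy $\pi$ for $G$, define $\pi'$ for $G_r$ to be the policy that, at every decision point, restricts its attention to the \emph{first level} of $\Tree(t,t')$ — i.e.\ it never looks more than one epoch ahead — and among those first‑level children makes the choice $\pi$ would make on the corresponding observation multiset. The fact to check is that, under this restriction, a single step of $G_r$ from state $(t,t')$ is distributionally identical to a single step of $G$ acting on tail $t$: by construction the first‑level children of $\Tree(t,t')$ are exactly the pairs $(c_{i,j},t_{i,j})\sim\fdist$ for $0\le i\le t$, $1\le j\le {t\choose i}$ — precisely Steps 3–4 of RANDAO Manipulation Game v2 — with the edge to child $(t',t_{i,j})$ carrying reward $c_{i,j}+t_{i,j}-i = (c_{i,j}-i)+t_{i,j}$, matching the reward $G$ accumulates. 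The second coordinate $t'$ is merely carried along and becomes the tail that the next step acts on.

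Next I would make the ``two interleaved instances'' statement precise. Writing the sequence of states visited by $\pi'$ in $G_r$ as $(t_0,t_1),(t_1,t_2),(t_2,t_3),\dots$ (consecutive pairs share a coordinate, by the remark following the definition of $\Tree$, since epoch assignments are interleaved), the subsequence $t_1,t_3,t_5,\dots$ evolves exactly as the state chain of $G$ run under $\pi$, and independently $t_0,t_2,t_4,\dots$ also evolves exactly as the state chain of $G$ under $\pi$, each drawing its own fresh $\fdist$ samples. The reward of the $m$‑th step of $G_r$ equals the reward of the corresponding step in whichever of the two $G$‑instances is active at step $m$, and every step advances $\textsc{Rounds}$ by $\ell$ in both games. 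Both copies are faithful runs of $\pi$ on the ergodic MDP $M'_G$, so by the propositions of Section~\ref{sec:background} each has a well‑defined, initial‑state‑independent average reward $\Gamma_\pi(G)$, and their Cesàro averages converge (not merely along a subsequence).

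Finally I would take the limit: partitioning the first $2N$ steps of $G_r$ into the $N$ steps of each interleaved copy, the total reward is the sum of the two copies' rewards, converging to $2N\ell\cdot\Gamma_\pi(G)$, while the total number of rounds is $2N\ell$; hence $\Gamma_{\pi'}(G_r)=\Gamma_\pi(G)$, and since the partial sums converge the $\liminf$ in the payoff definition is an honest limit. The only place mild care is needed is the interleaving bookkeeping — matching step indices of $G_r$ to the right copy and confirming the two chains use independent randomness — but this is routine once the first‑level children are identified with a step of $G$; the substantive content is entirely in that identification, which is why the statement follows ``immediately as a consequence of the definition.''
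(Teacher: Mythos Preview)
Your proposal is correct and follows exactly the route the paper takes: restrict to first-level choices in $\Tree(t,t')$, identify this with a step of $G$ acting on tail $t$, and observe that the resulting process is two interleaved copies of $G$ under $\pi$ with the same per-epoch reward structure. The paper's own treatment is a single sentence (``immediately as a consequence of the definition''), so you have simply written out the bookkeeping that the paper leaves implicit; there is no divergence in approach.
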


\subsection{Bounding the height of the tree}

Right now, the tree might have leaf nodes of the form $(t,\ell)$ that immediately start a
new tree once the adversary reaches the leaf node. For reasons that would be clear in the
next subsection, we now define a modified tree definition $\Tree'$ such that
whenever we reach a terminal path $\cdots \to (32,t) \to (t,32)$ in the expansion of a subtree
where $t < \ell$ and therefore
$(t,32)$ becomes a leaf node, we replace $t$'s with $\ell$'s and keep expanding the
tree. Clearly, the expected height of $\Tree'$ is greater than $\Tree$.
From here on, we use this definition.

Let $H(T)$ be defined as a random variable denoting the height of the (sub)tree $T$
which is the maximum length over root to leaf paths.

\begin{lemma} \label{lem:tree-prob}
$\Pr(H(\Tree(t,\ell)) \geq \lambda) \leq ((2\alpha)^{\ell}
+ (1-\alpha^\ell) (4\alpha)^\ell)^{\lambda-2}$
for $\lambda \geq 2$.
\end{lemma}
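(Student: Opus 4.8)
The plan is to bound the height of $\Tree'(t,\ell)$ by analyzing, level by level, the probability that the tree survives (i.e.\ that some root-to-leaf path keeps extending). The key observation is that growth past a given node only continues when certain lucky sampling events occur; specifically, starting from a node $(t,\ell)$ the tree generates $2^t \le 2^\ell$ children, each of which is a subtree root $(\ell, t_{i,j})$ (after the $\Tree'$ modification that replaces terminal $t<\ell$ with $\ell$), and from such a root the expansion continues one more level by sampling another batch of $2^\ell$ pairs. So I would set up a recursion: let $p$ be an upper bound on the probability that a subtree rooted at a node of the form $(\ell, \cdot)$ has height $\ge \lambda$, and show $p \le b \cdot p'$ where $b$ is the expected number of children that themselves give rise to a ``live'' sub-subtree and $p'$ is the bound at height $\lambda-1$. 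Unrolling this geometric recursion gives the claimed $(\cdots)^{\lambda-2}$ form, with the $-2$ accounting for the two initial levels ($\Tree(t,\ell)\to$ subtree roots $\to$ their children) that are always present.

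Concretely, I would first note that from a node $(t,\ell)$ we sample $2^t$ pairs $(c_{i,j},t_{i,j})\sim\fdist$; among these, the ones that yield a grandchild that can keep growing are exactly the $(i,j)$ for which $t_{i,j}=\ell$ (giving a child whose own expansion samples a fresh batch) — this happens with probability $\alpha^\ell$ per sample — plus, in the $\Tree'$ modification, the terminal subtrees $\cdots\to(\ell,t')\to(t',\ell)$ where we relabel and keep expanding, which requires two successive ``full control'' events. A union/counting bound over the $\le 2^\ell$ children from the first event gives a contribution $2^\ell\cdot\alpha^\ell=(2\alpha)^\ell$, and over the $\le 4^\ell$ root-to-leaf continuations from the relabeled terminal case gives $(1-\alpha^\ell)(4\alpha)^\ell$ (the $1-\alpha^\ell$ factor recording that this is the complementary branch, and $4^\ell$ counting the two nested batches of $2^\ell$ samples each). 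Summing, the per-level survival multiplier is at most $q := (2\alpha)^\ell + (1-\alpha^\ell)(4\alpha)^\ell$.

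With that multiplier in hand, I would argue by induction on $\lambda$: $\Pr(H(\Tree'(t,\ell)) \ge \lambda)$ is at most the expected number of ``live'' nodes at depth $2$ times $\Pr(H \ge \lambda-1$ from such a node$)$, and iterating $\lambda-2$ times (the base case $\lambda=2$ being trivially bounded by $1=q^0$) yields $\Pr(H(\Tree'(t,\ell))\ge\lambda)\le q^{\lambda-2}$, which is exactly the statement after substituting $q$. Finally, since $\Tree'$ has height stochastically dominating $\Tree$, the same bound holds for $\Tree(t,\ell)$.

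The main obstacle I anticipate is getting the bookkeeping of the two-level structure exactly right — in particular, being careful that a node $(t,\ell)$ with $t<\ell$ produces children that are \emph{subtree} roots of the form $(\ell, t_{i,j})$ (so the branching at alternating levels is asymmetric), and correctly identifying which of the $\le 2^\ell$ (resp.\ $\le 4^\ell$) continuations can survive, so that the per-level factor comes out as $(2\alpha)^\ell + (1-\alpha^\ell)(4\alpha)^\ell$ and not something slightly larger or smaller. The actual probability estimates (each sample is ``full control'' with probability $\alpha^\ell$, union bound over a power-of-two-sized batch) are routine once the tree's branching pattern is pinned down.
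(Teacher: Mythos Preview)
Your proposal is correct and follows essentially the same approach as the paper: reduce to the worst case via stochastic domination, set up an induction on $\lambda$ with base case $\lambda=2$, and obtain the per-level survival multiplier $q=(2\alpha)^\ell+(1-\alpha^\ell)(4\alpha)^\ell$ by a union/expectation bound over the (at most) $2^\ell$ children and their $2^\ell$ grandchildren. The paper phrases the same computation slightly differently---it bounds each of the $2^\ell$ trials by a per-trial success probability $\alpha^\ell+(1-\alpha^\ell)(2\alpha)^\ell$ and then takes the expectation $\E[S]$---but this is exactly your decomposition after distributing the outer factor of $2^\ell$. Your anticipated obstacle (the two-level bookkeeping for the $\Tree'$ relabeled branches) is the only delicate point in the paper's argument as well.
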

\begin{proof}
  We first observe that it suffices to consider $H(\Tree(\ell,\ell))$.
  Since $\Tree(t,\ell)$ is identically distributed with
  a subtree of $\Tree(\ell,\ell)$ and each $\Subtree(\cdot,\cdot)$
  that get sampled in $\Tree(t,\ell)$ has height $\geq 0$,
  $\Pr(H(\Tree(t,\ell)) \geq \lambda) \leq \Pr(H(\Tree(\ell,\ell)) \geq \lambda)$.

  Let $H := H(\Tree(\ell,\ell))$ for notational convenience.
  We now argue by induction that $\Pr(H \geq \lambda)
  = ((2\alpha)^{\ell} + (1-\alpha^\ell) (4\alpha)^\ell)^{\lambda-2}$.
  In the base case, we have $\Pr(H \geq 2) = 1$ since by definition, $\Tree(\ell,\ell')$
  has height at least 2.

  Next we observe that for a (sub)tree to expand one more level, we need
  successive epochs to have tail length equal to $\ell$ in which case by
  step 3 of the definition of $\Subtree(\cdot,\cdot)$, we append one more
  level to the tree to reach a height of at least one more, or
  the next tail length is $< \ell$ but the one following it is $\ell$ in which case we
  still expand with the new definition.
  Suppose we are at $(\ell,\ell)$ which are epochs $i$ and $i+1$.
  For epoch $i+2$, we get $2^\ell$ trials each with success probability
  $\leq \alpha^{\ell} + (1-\alpha^\ell) (2\alpha)^\ell$ using a union bound.
  Let $S$ denote the number of successes.
  We also define $H_j$ as the height of the subtree from each successful branch.
  Using the fact that $\Pr(H_j \geq \lambda) \leq \Pr(H \geq \lambda)$ as before,
  we obtain the following:
    \begin{align*}
        \Pr(H \geq \lambda)
	    &=    \sum_{k=1}^{2^{\ell}} \Pr(S = k) \Pr(\bigcup_{j=1}^{k} (H_j \geq \lambda - 1)) \\
        &\leq \sum_{k=1}^{2^{\ell}} \Pr(S = k) \sum_{i=1}^k \Pr(H \geq \lambda-1))
        && \text{union bound} \\
        &= \sum_{k=1}^{2^{\ell}} k\Pr(S = k)\Pr(H \geq \lambda-1)) \\
        &= \Pr(H \geq \lambda-1)) \sum_{k=1}^{2^{\ell}} k\Pr(S = k) \\
        &\leq \Pr(H \geq \lambda-1)) ((2\alpha)^{\ell} + (1-\alpha^\ell) (4\alpha)^\ell) \\
	      &= ((2\alpha)^{\ell} + (1-\alpha^\ell) (4\alpha)^\ell)^{\lambda-2} && \text{inductive hypothesis}
    \end{align*}
Therefore the claim holds.
\end{proof}

\begin{proposition} \label{prop:exp-tree}
  For $(2\alpha)^{\ell} + (1-\alpha^\ell) (4\alpha)^\ell < 1$,
  $$\E[H(\Tree(t,\ell))] \leq 1 +
  \frac{1}{1 - ((2\alpha)^{\ell} + (1-\alpha^\ell) (4\alpha)^\ell)}$$
\end{proposition}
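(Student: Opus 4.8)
The proof of Proposition~\ref{prop:exp-tree} should follow immediately from Lemma~\ref{lem:tree-prob} via the standard tail-sum formula for expectations of nonnegative integer-valued random variables. Let me sketch the plan.

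\medskip

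\noindent\textbf{Plan.} The plan is to write $\E[H(\Tree(t,\ell))]$ as a sum of tail probabilities and then bound each tail probability using Lemma~\ref{lem:tree-prob}. First I would note that $H(\Tree(t,\ell))$ is a random variable taking values that are at least $2$ (since every tree, by definition, has height at least $2$), and in fact it is enough to work with nonnegative-integer-valued random variables and the identity $\E[X] = \sum_{\lambda \geq 1} \Pr(X \geq \lambda)$. Splitting off the small terms, $\E[H] = \sum_{\lambda=1}^{\infty} \Pr(H \geq \lambda) = \Pr(H\geq 1) + \Pr(H \geq 2) + \sum_{\lambda \geq 3} \Pr(H \geq \lambda)$. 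The first two terms each equal $1$ (every tree has height at least $2$), contributing $2$; wait—I need to be careful, since I want to land on $1 + \frac{1}{1-p}$ rather than $2 + \frac{1}{1-p}$ style bound, so I would instead write $\E[H] = 1 + \sum_{\lambda \geq 2}\Pr(H \geq \lambda)$, using that $\Pr(H\geq 1) = 1$, and then bound $\sum_{\lambda \geq 2}\Pr(H\geq\lambda)$.

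\medskip

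\noindent\textbf{Key steps.} Write $p := (2\alpha)^{\ell} + (1-\alpha^\ell)(4\alpha)^\ell$, which is $<1$ by hypothesis. From Lemma~\ref{lem:tree-prob}, $\Pr(H(\Tree(t,\ell)) \geq \lambda) \leq p^{\lambda - 2}$ for $\lambda \geq 2$. Then
\[
\E[H(\Tree(t,\ell))] = \sum_{\lambda=1}^{\infty}\Pr(H(\Tree(t,\ell)) \geq \lambda) = 1 + \sum_{\lambda=2}^{\infty}\Pr(H(\Tree(t,\ell)) \geq \lambda) \leq 1 + \sum_{\lambda=2}^{\infty} p^{\lambda-2} = 1 + \sum_{k=0}^{\infty} p^{k} = 1 + \frac{1}{1-p},
\]
which is exactly the claimed bound. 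The geometric series converges precisely because $p < 1$, which is the hypothesis of the proposition.

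\medskip

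\noindent\textbf{Main obstacle.} There is essentially no hard step here — the only things to be careful about are (i) justifying the tail-sum identity $\E[X] = \sum_{\lambda\geq 1}\Pr(X\geq\lambda)$ for the nonnegative integer-valued $H$, which is standard (Tonelli / rearrangement of a nonnegative series), and (ii) correctly indexing the sum so the exponent shift $p^{\lambda-2}$ lines up with $\sum_{k\geq 0}p^k$ and the leading ``$1$'' is accounted for by $\Pr(H\geq 1)=1$. One should also remark that $\E[H]$ is finite (so the manipulation is valid rather than vacuous), but that is immediate from the bound just derived. If one wanted to be extra careful, one could note $H \geq 2$ always, so $\Pr(H \geq 1) = \Pr(H\geq 2) = 1$ and one could alternatively write $\E[H] = 2 + \sum_{\lambda \geq 3}\Pr(H\geq\lambda) \leq 2 + \sum_{\lambda\geq 3}p^{\lambda-2} = 2 + \frac{p}{1-p} = 1 + \frac{1}{1-p}$, arriving at the same expression; either bookkeeping works.
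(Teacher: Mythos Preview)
Your proposal is correct and matches the paper's own proof essentially line for line: the paper also writes $\E[H]$ via the tail-sum identity, invokes Lemma~\ref{lem:tree-prob} to bound $\Pr(H\geq \lambda)\le p^{\lambda-2}$ for $\lambda\ge 2$, and sums the resulting geometric series to get $1+\tfrac{1}{1-p}$. Your remark that either bookkeeping (splitting off $\Pr(H\ge 1)$ or $\Pr(H\ge 1)+\Pr(H\ge 2)$) yields the same bound is exactly right; the paper uses the former.
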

\begin{proof}
  Let $H := H(\Tree(t,\ell))$.
  \begin{align*}
    \E[H]
    &= \sum_{k = 2}^\infty k \Pr(H = k) \\
    &= \Pr(H \geq 2) + \sum_{k = 2}^\infty \Pr(H \geq k) \\
    &\leq 1 + \sum_{k=2}^\infty ((2\alpha)^{\ell} + (1-\alpha^\ell) (4\alpha)^\ell)^{k - 2}
    && \text{Lemma~\ref{lem:tree-prob}} \\
    &= 1 + \frac{1}{1 - ((2\alpha)^{\ell} + (1-\alpha^\ell) (4\alpha)^\ell)}
    && \text{since $(2\alpha)^{\ell} + (1-\alpha^\ell) (4\alpha)^\ell < 1$}
  \end{align*}
\end{proof}

We note here that $(2\alpha)^{\ell} + (1-\alpha^\ell) (4\alpha)^\ell < 1$ holds for $0 \leq \alpha \leq 0.24$
which is the range we consider for our calculations.

\subsection{Bounding the reset time}
Consider an instance of the game $G_r$ where we start from some state $(t, t')$
while running some policy. We call the number of rounds it takes
to reach a state with $(t,0)$ the reset time. Let $\tau(t,t')$ be a random variable
representing the reset time from state $(t,t')$.

We now would like to bound $\E[\tau(t,\ell)]$.

\begin{lemma} \label{lem:reset-bound}
  $\E[\tau(t,\ell)] \leq X(t,\ell)$ for $(2\alpha)^{\ell} + (1-\alpha^\ell) (4\alpha)^\ell < 1$
  where $X(t,\ell)$ is the maximum solution to the following system of linear equations:
  for $t \in \{0,\dots,\ell\}$ and $t' \in \{1,\dots,\ell-1\}$,
  \begin{align*}
    x_{(t,\ell)} &= 1 + \frac{1}{1 - ((2\alpha)^{\ell} + (1-\alpha^\ell) (4\alpha)^\ell)} + x{(\ell-1,\ell-1)} \\
    x_{(t,t')} &= 1 + \sum_{t'' = 0}^{\ell} \Pr(t \to t'') x_{(t',t'')} \\
    x_{(t,0)} &= 0
  \end{align*}
  and $\Pr(t \to t'')$ is the $\TM$ transition probabilities from Section~\ref{sub:tailmax}.
\end{lemma}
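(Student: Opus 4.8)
The plan is a first-passage analysis of $\E[\tau(t,t')]$, organized by the value of the look-ahead coordinate $t'$ so that it lines up exactly with the three cases of the linear system: I will show that the (finite) vector $\big(\E[\tau(t,t')]\big)$ satisfies that system with ``$\le$'' in place of ``$=$'', i.e.\ is a sub-solution of a monotone operator, and then conclude $\E[\tau(t,\ell)]\le X(t,\ell)$ by invoking that $X$ is its \emph{maximum} solution (so any sub-solution is dominated by it); the convergence hypothesis $(2\alpha)^{\ell}+(1-\alpha^{\ell})(4\alpha)^{\ell}<1$ is also what makes $\E[\tau]$ finite to begin with. The one ingredient I would set up first is a monotonicity lemma for the reset time: it is non-decreasing in both coordinates. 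This follows from a coupling in which the ``larger'' process always has at least as many i.i.d.\ tail samples available each round (the number of samples in a round is $2^{(\text{current first coordinate})}$) and weakly larger tails throughout, so the event ``second coordinate $=0$'' occurs no earlier. Monotonicity is precisely what lets us replace an arbitrary policy by $\TM$ in the recurrence: among the tails available in a round, choosing the largest (what $\TM$ does) weakly maximizes the residual reset time, so the one-round expansion under any policy is dominated by that under $\TM$.

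The three cases are then: (i) $t'=0$, where the reset event already holds, so $\E[\tau(t,0)]=0=x_{(t,0)}$; (ii) $0<t'<\ell$, where $\Tree(t,t')$ is a single level, so one round carries $(t,t')$ to $(t',t'')$ with $t''$ the policy's pick among $2^{t}$ draws from the tail marginal of $\fdist$, whence $\E[\tau(t,t')]\le 1+\sum_{t''=0}^{\ell}\Pr(t\to t'')\,\E[\tau(t',t'')]$ with $\Pr(t\to t'')$ the $\TM$ transition probabilities of Section~\ref{sub:tailmax} --- the middle equation; and (iii) $t'=\ell$, the deep case. Here I would couple the actual game's repeated generation of $\Tree(\cdot,\cdot)$ against a single modified tree $\Tree'(t,\ell)$: each freshly regenerated $\Tree(N)$ embeds in $\Tree'(N)$, so under any policy the adversary's trajectory is a strictly descending path in $\Tree'(t,\ell)$. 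Since in $\Tree'$ every node with a coordinate equal to $\ell$ is (re)expanded, its only leaves are states with both coordinates at most $\ell-1$; and since $\Tree'(t,\ell)$ is a.s.\ finite by Lemma~\ref{lem:tree-prob}, the descending path must hit such a leaf within $H(\Tree'(t,\ell))$ rounds. By the strong Markov property $\E[\tau(t,\ell)]=\E[\sigma]+\E[\tau(\text{leaf state})]$, where $\sigma\le H(\Tree'(t,\ell))$ is the time to reach the leaf; the leaf state has both coordinates at most $\ell-1$ so its reset time is at most $\E[\tau(\ell-1,\ell-1)]$ by monotonicity, and by Proposition~\ref{prop:exp-tree} we have $\E[\sigma]\le\E[H(\Tree'(t,\ell))]\le 1+\frac{1}{1-((2\alpha)^{\ell}+(1-\alpha^{\ell})(4\alpha)^{\ell})}$. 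Linearity of expectation (the leaf bound is a constant, so no independence is needed) gives the first equation $\E[\tau(t,\ell)]\le 1+\frac{1}{1-((2\alpha)^{\ell}+(1-\alpha^{\ell})(4\alpha)^{\ell})}+\E[\tau(\ell-1,\ell-1)]$.

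The crux is case (iii), and within it the coupling: one must make precise that a fresh $\Tree(N)$ embeds in $\Tree'(N)$, so that ``walk to a chosen node, then regenerate'' really composes into one strictly descending walk in $\Tree'(t,\ell)$ rather than into a sum of heights over many independent trees --- this is what keeps the round count bounded by a single $H(\Tree'(t,\ell))$ --- and that the walk necessarily exits through a genuine leaf (both coordinates $<\ell$), not by stalling at an interior node. The coupling behind the monotonicity lemma is the other place that needs care. Everything else --- assembling the three cases and invoking the maximum-solution characterization of $X$ via Knaster--Tarski-style monotonicity --- is routine.
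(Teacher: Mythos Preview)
Your approach is correct and essentially the same as the paper's. Both arguments (i) observe that among all policies the reset time is maximized by always picking the largest available tail (the paper argues this directly via ``a longer tailed state samples identically distributed future states plus additional samples'', you phrase it as a coupling-based monotonicity lemma), (ii) handle the $t'=\ell$ case by embedding the game's repeated tree generations into a single $\Tree'(t,\ell)$, bounding the time to a leaf by $\E[H(\Tree'(t,\ell))]$ via Proposition~\ref{prop:exp-tree}, and dominating the leaf state by $(\ell-1,\ell-1)$, and (iii) conclude via the linear system. The only packaging difference is that the paper explicitly constructs a modified game (with leaves forced to $(\ell-1,\ell-1)$) whose reset time satisfies the system and dominates the original, whereas you show the original reset-time vector is a sub-solution and invoke the maximum-solution property; these are equivalent.

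One point to tighten: your monotonicity lemma and the sub-solution inequalities are stated for ``the reset time'' $\E[\tau(t,t')]$, but $\tau$ depends on the policy, and for a \emph{fixed} arbitrary $\pi$ the map $(t,t')\mapsto \E[\tau^\pi(t,t')]$ need not be monotone. What you actually need (and what your coupling establishes) is monotonicity of $u(t,t'):=\sup_\pi \E[\tau^\pi(t,t')]$; then the dynamic-programming step in case~(ii) reads $u(t,t')=1+\E\big[\max_{\text{choice}} u(t',T'')\big]\le 1+\sum_{t''}\Pr(t\to t'')\,u(t',t'')$, and case~(iii) goes through as written. The paper sidesteps this by fixing the reset-time-maximizing policy up front; making that choice explicit in your write-up would close the gap.
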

\begin{proof}
  We start by considering a policy that would maximize the reset time
  (while ignoring reward) to get an upper bound on the reset time of any policy in
  expectation.

  Suppose we are at some state $(t',t'')$ where $t'' < \ell$.
  Then, for the next $2^{t'}$ options, clearly choosing the longest tail is best since
  a longer tailed state samples identically distributed future states plus additional samples.
  If $t'' = \ell$, then we would sample $\Tree'(t',\ell)$ and choose a path. Note that it is now not
  clear what path would lead to the longest reset time since we might prefer a shorter path
  that ends at a longer tail state compared to the longest path in the tree.
  To solve this issue, we modify the game such that when $t'' = \ell$ and we sample the tree,
  at the end, each leaf node is replaced with $(\ell-1,\ell-1)$. We also note that
  we are using the modified tree definition $\Tree'$.
  This modification only gives the adversary more options while keeping
  the previous options the same, hence any instance that could happen in the previous
  definition is still possible, and this modification only increases the reset time.
  With the new definition, now it is clear that choosing the maximum length path in the tree
  gives us the greatest reset time.

  Now that the policy maximizing the reset time is fixed, we can analyze it to get the upper
  bound. Suppose $\tau'$ refers to the reset time of this new process and the transition
  probabilities used below are for the tail maximizing policy we just discussed above.
  Consider the state $(t,t')$. Clearly, the expected reset time when $t' = 0$ is 0.
  If $0 < t' < \ell$, then we know that the expected reset time is equal to one more than the expected
  reset time of the next state. Hence,
  \[
    \E[\tau'(t,t')] = 1 + \sum_{t'' = 0}^{\ell} \Pr((t,t') \to (t',t'')) \E[\tau'(t',t'')]
    \qquad \text{and} \qquad
    \E[\tau'(t,0)] = 0
  \]
  Note that the transition probability here when $t' < \ell$ simply becomes the transition probability
  of choosing the maximum tailed state out of $2^t$ samples. This is exactly the
  transition probabilities from Section~\ref{sub:tailmax}.
  Lastly, if $t' = \ell$, we know that we start expanding the tree which due to our modification ends
  at $(\ell - 1, \ell -1)$. By Proposition~\ref{prop:exp-tree}, $\E[H(\Tree'(t,\ell))] \leq
  1 + 1/(1 - ((2\alpha)^{\ell} + (1-\alpha^\ell) (4\alpha)^\ell))$.
  Therefore we get the following:
  \[
    \E[\tau'(t,\ell)] = 1 + 1/(1 - ((2\alpha)^{\ell} + (1-\alpha^\ell) (4\alpha)^\ell)) + \E[\tau'(\ell-1,\ell-1)]
  \]
  We have now written a system of linear equations that represent the expected reset time from each state.
  Since we overestimated the reset time from state $(t,\ell)$, we have only made the reset time larger
  and the solution to this system of linear equation will be an upper bound on the expected reset time.

  Let $X(t,\ell)$ be the maximum of the set of solutions to the system of linear equations above
  which can be computed efficiently.
\end{proof}

Solving the system of linear equations above, we obtain the results in Figure~\ref{fig:reset-time}
for $X(t,32)$.
\begin{figure}[htp]
    \centering
    \includegraphics[width=0.8\textwidth]{./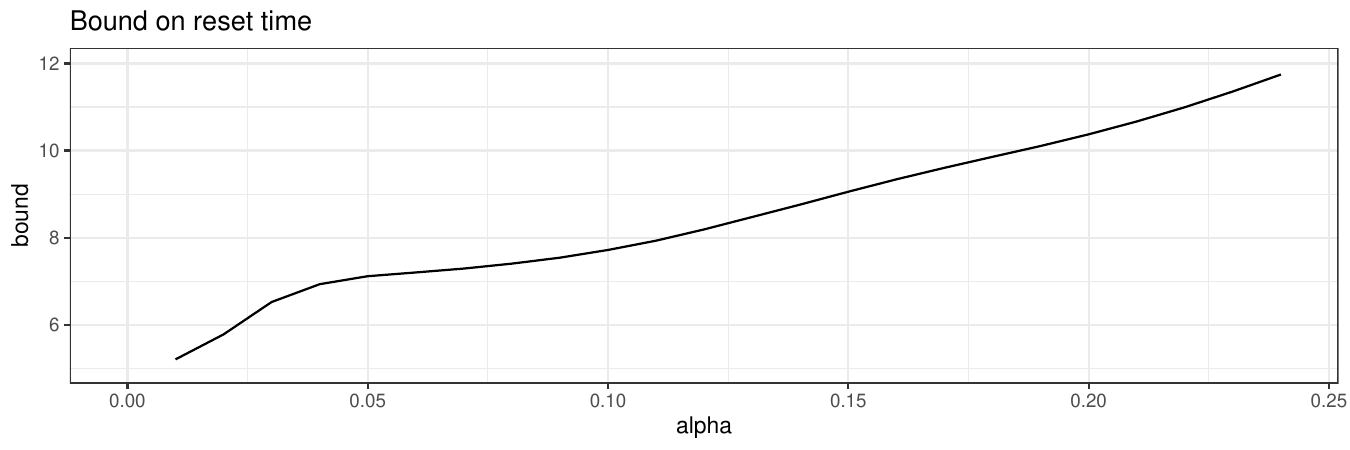}
    \caption{Plot of the reset time bound $X(t,32)$}
    \label{fig:reset-time}
\end{figure}

\subsection{Bounding the error term}
Lastly, we need to show that for any strategy playing the general RANDAO manipulation game,
there exists a strategy in our main model, the interleaved RANDAO manipulation game (where
two instances of our main model are interleaved) that approximately earns
the same average reward in expectation. This would show that considering optimal strategies
in the RANDAO manipulation game is sufficient.

We will proceed by a reformulation of the general RANDAO manipulation game into two explicit phases that
make the distinction with the simpler model explicit and allow us to bound the difference in the average
reward earned by both. We keep track of rewards and rounds from two different phases of the game separately.
Consider the following version of the interleaved RANDAO manipulation game:
\begin{definition}[subgame formulation of the general RANDAO manipulation game]
From some initial state $(t,t')$ and variables initialized to $R = R_\varepsilon = T = T_\varepsilon = 0$:
\begin{itemize}
  \item If $t' < \ell$, we proceed as normal to draw $2^t$ samples and
    transition to a new state while accumulating rewards in variable $T$ as usual,
    and incrementing the number of rounds $R$ by one.
  \item If $t' = \ell$, then we enter a subgame where we play the
	general RANDAO game until we hit a state such that $(t'',0)$ for some
	$t''$. Add this reward to $T_\varepsilon$, increment $T$ by one, and
    add the additional number of rounds spent in the subgame to $R_\varepsilon$.
\end{itemize}
\end{definition}

We call this the \emph{subgame formulation} of the general game.
Clearly, this is a reformulation of the general RANDAO manipulation game where
the difference in reward from the original formulation is made more explicit.
Moreover, excluding the subgame, the actions of the adversary when $t' < \ell$
describes a policy for the RANDAO manipulation game $G$
which we use in the proof of the following theorem.

One caveat is that we now ignore the number of rounds it would take us in the subgame,
but this only increases the average reward.

\begin{theorem}
  Suppose a policy $\pi$ gets average reward $\Gamma$ in the general
  RANDAO manipulation game $G_r$.
  Then, there exists a policy $\pi^*$ in the RANDAO manipulation game $G$
  such that $\Gamma - \Gamma^* \leq X(t,\ell) (2\alpha)^{\ell} \ell$ where
  $\Gamma^*$ is the reward of $\pi^*$ in $G$ and $X(t,\ell)$
  is defined in Lemma~\ref{lem:reset-bound}.
\end{theorem}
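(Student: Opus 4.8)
The plan is to run $\pi$ in the subgame formulation of $G_r$, split the reward it earns into a ``main-phase'' part and a ``subgame'' part, bound the main-phase part by the optimal average reward of the simpler game $G$, and show the subgame part is negligible via the reset-time bound of Lemma~\ref{lem:reset-bound}. Concretely: running $\pi$ in the subgame formulation over a long horizon, let $R$ count the epochs spent in the main phase (transitions taken while the second coordinate is $<\ell$), $R_\varepsilon$ the epochs spent inside subgames, and $T,T_\varepsilon$ the rewards accumulated in each, so the horizon is $R+R_\varepsilon$ epochs with total reward $T+T_\varepsilon$. Every transition reward $c+t'-i$ is nonnegative once we discard the strictly dominated move of missing $i>0$ slots to reach $c=t'=0$ (removing it from $\pi$'s repertoire never lowers its reward), so dropping $R_\varepsilon$ from the denominator only increases the ratio:
\[
  \Gamma \;=\; \lim \frac{T+T_\varepsilon}{(R+R_\varepsilon)\,\ell}
        \;\le\; \lim \frac{T}{R\,\ell} \;+\; \lim \frac{T_\varepsilon}{R\,\ell},
\]
the limits existing because the main phase returns to a tail-$0$ state infinitely often (subgames have finite expected length by Lemma~\ref{lem:reset-bound}), so the time-averages converge by ergodicity and the renewal--reward theorem.

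For the main-phase term: as noted right after the subgame formulation, the main-phase transitions, restricted to either epoch parity, are exactly a play of $G$ — each draws $2^{t}$ samples from $\fdist$ at the current tail $t$ and collects $c+t'-i$, the only extra events being subgame excursions that merely reset a coordinate to $0$. Since the average reward of any (even history-dependent) strategy for an ergodic MDP is at most that of an optimal stationary policy (cf.\ Section~\ref{sec:background}), and $\lim T/(R\ell)$ is a weighted average of the two parities' main-phase reward rates, letting $\pi^{*}$ be an optimal policy for $G$ gives $\lim T/(R\ell)\le\Gamma^{*}$.

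The crux is the subgame term. A subgame is triggered in a main-phase epoch only if at least one of the $2^{t}$ sampled continuations has tail exactly $\ell$, which by a union bound happens with probability at most $2^{t}\alpha^{\ell}\le(2\alpha)^{\ell}$; so over $R$ main-phase epochs the expected number of subgames is at most $(2\alpha)^{\ell}R$ by linearity of expectation. Each subgame starts from a state $(t,\ell)$, hence lasts at most $X(t,\ell)$ epochs in expectation by Lemma~\ref{lem:reset-bound}, and collects at most $\ell$ reward per epoch, i.e.\ at most $\ell\,X(t,\ell)$ in expectation. By the renewal--reward theorem, $\E[T_\varepsilon]\le(2\alpha)^{\ell}R\cdot\ell\,X(t,\ell)$, whence $\lim T_\varepsilon/(R\ell)\le(2\alpha)^{\ell}X(t,\ell)\le X(t,\ell)(2\alpha)^{\ell}\ell$. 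Adding the two terms yields $\Gamma\le\Gamma^{*}+X(t,\ell)(2\alpha)^{\ell}\ell$.

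The main obstacle I anticipate is making the subgame accounting rigorous: one must justify that $\lim T_\varepsilon/(R\ell)$ is genuinely governed by the per-main-epoch expected subgame contribution, which needs a renewal--reward / strong-law argument for the sequence of subgame excursions plus a check that the main-phase process (with excursions spliced in) is recurrent, so the averages exist at all. A lesser, clerical difficulty is handling the interleaving of the two epoch parities and keeping the slot-versus-epoch normalization straight — the chain of inequalities above actually leaves a factor of $\ell$ unused, which is where the $\ell$ in the stated bound is absorbed.
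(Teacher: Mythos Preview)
Your decomposition (split into main-phase reward $T$ and subgame reward $T_\varepsilon$, bound $T_\varepsilon/R$ by $(2\alpha)^\ell\,\ell\,X(t,\ell)$ via a union bound on triggering plus Lemma~\ref{lem:reset-bound}) matches the paper's. Two points deserve attention.

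\textbf{Nonnegativity.} The claim that ``every transition reward $c+t'-i$ is nonnegative once we discard the dominated move to $c=t'=0$'' is false: for instance $i=3,\ c=0,\ t'=1$ has reward $-2$ and is not dominated (a larger future tail can justify a negative immediate reward). Fortunately you only need $T+T_\varepsilon\ge 0$ in the limit to drop $R_\varepsilon$ from the denominator, and that is free (if $\Gamma\le 0<\Gamma^*$ there is nothing to prove). The paper's algebra instead computes $\Gamma-\Gamma^*\le (T_\varepsilon R - T R_\varepsilon)/(R(R+R_\varepsilon))\le T_\varepsilon/R\le \ell\,R_\varepsilon/R$, which needs the same mild sign assumption.

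\textbf{The main-phase bound.} The step ``$\lim T/(R\ell)\le\Gamma^*$ because the main phase on each parity is exactly a play of $G$'' is where the paper does its actual work, and your justification has a gap. Subgame excursions do not merely ``reset a coordinate to $0$'': when the subgame ends at $(t'',0)$, the main phase resumes from a state that was \emph{not} produced by a $G$-transition from the pre-subgame main-phase state, so the main-phase trajectory on a given parity is not a run of any (even history-dependent) policy in $G$, and the MDP-optimality fact you invoke does not directly apply. The paper closes this by \emph{explicitly constructing} $\pi^*$ in (interleaved) $G$ that mimics $\pi$'s main-phase choices via a coupling: when $\pi$ enters a subgame at $(t,\ell)$, $\pi^*$ takes one $G$-step to $(\ell,t^*)$ (with reward $\lambda\ge 0$, counted as $0$ for the bound); since $\pi^*$ then has first coordinate $\ell\ge t''$, at the following step it sees a superset of the $2^{t''}$ options $\pi$ has at $(t'',0)$ and can copy $\pi$'s choice. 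This yields $\Gamma^*\ge\lim T/R$ for \emph{this} $\pi^*$ (not the optimal one), which is exactly the existence statement the theorem asserts. Your plan would go through once you replace ``main phase is a play of $G$'' by this mimicking argument.
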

\begin{proof}
  We consider a policy $\pi$ for the subgame variant and construct a policy $\pi^*$
  for the interleaved instance of the RANDAO manipulation game $G$.
  Suppose we are at state $(t,t')$. If $t' < \ell$, then the action of the policy is also
  available in the interleaved instance of the original game so we can also make the same
  choice for $\pi^*$.

  If $t' = \ell$, then after playing the subgame, $\pi$ transitions from $(t,\ell) \to^* (t'',0)$
  with reward $\varepsilon$ in several steps. Now, $\pi^*$ instead transitions from
  $(t,\ell) \to (\ell,t^*)$ with some reward $\lambda$. Note that for this transition $\pi^*$
  can be defined to pick any one of the options. Since we are establishing a bound on the difference,
  let $\lambda = 0$. In the next step, all options that are available
  to $\pi$ at $(t'',0)$ is also available to $\pi^*$ with identical distributions.
  Hence, we define $\pi^*$ such that $\pi^*$ chooses the action that $\pi$ would have
  chosen from the first $2^{t''}$ samples. Now, consider the value of each variable at time $m$.
  For notational convenience, we write $V$ instead of $V(m)$ for variable $V$.
  Using the existence of the limits,
  \begin{alignat*}{2}
      \Gamma - \Gamma^*
      &\leq \lim_{m \to \infty} \left( \frac{T + T_\varepsilon}{R + R_\varepsilon} \right)
         - \lim_{m \to \infty} \left( \frac{T}{R} \right)
      &&= \lim_{m \to \infty} \left( \frac{T + T_\varepsilon}{R + R_\varepsilon}
         - \frac{T}{R} \right) \\
      &= \lim_{m \to \infty} \left( \frac{T_\varepsilon R - T R_\varepsilon}{R(R + R_\varepsilon)} \right)
      &&\leq \lim_{m \to \infty} \left( \frac{T_\varepsilon}{R + R_\varepsilon} \right) \\
      &\leq \lim_{m \to \infty} \left( \frac{T_\varepsilon}{R} \right)
      &&\leq \ell \lim_{m \to \infty} \left( \frac{R_\varepsilon}{R} \right)
  \end{alignat*}
  Let $n_\ell(m)$ be the number of rounds until time $m$ that $\pi^*$ hits a state with $(\cdot, \ell)$.
  Hence, using Lemma~\ref{lem:reset-bound}, we have that
  $\lim_{m \to \infty} \left( R_\varepsilon/R \right) \leq X(t,\ell) \lim_{m \to \infty} (n_\ell / R)$.

  Observe that once we reach $(t,\ell)$ for some $t$, we transition back effectively into $(t',0)$.
  Since at any $(t,t')$ state with $t' < \ell$, we have $\leq (2\alpha)^\ell$ probability
  of transitioning to a state of the form $(t',\ell)$ by first applying a union bound
  to $2^{t'}$ events with success probability $\alpha^\ell$ and using $2^{t'} \leq 2^\ell$,
  we can simply consider the following two state process in order to find an upper bound on
  $\lim_{m \to \infty} (n_\ell / R)$. This is because we are overestimating the transition
  probability to a state of the form $(\cdot,\ell)$, and underestimating the transition
  probability back to a state of the form $(\cdot,t)$ for $t < \ell$.
  \begin{figure}[htp]
    \centering
    \begin{tikzpicture}[shorten >=1pt,node distance=2.5cm,auto]
    \tikzstyle{every state}=[fill={rgb:black,1;white,10}]

    \node[state]   (x)                      {$(\cdot,\ell)$};
    \node[state] (y) [right of=x]    {$(\cdot,t)$};

    \path[->]
    (x) edge [bend left]    node {$1-(2\alpha)^{\ell}$}   (y)
    (x) edge [loop left]    node {$(2\alpha)^{\ell}$}     ()
    (y) edge [bend left]	node {$(2\alpha)^{\ell}$}     (x)
    (y) edge [loop right]   node {$1-(2\alpha)^{\ell}$}   ();
  \end{tikzpicture}
  \end{figure}
  Solving for the stationary distribution,
  we see that the fraction of the state $(\cdot,\ell)$ is $(2\alpha)^\ell$.
  Therefore we get $\Gamma - \Gamma^* \leq \ell X(t,\ell) (2\alpha)^{\ell}$.
\end{proof}

Plotting this bound for $\alpha$ such that $(2\alpha)^{\ell} + (1-\alpha^\ell) (4\alpha)^\ell < 1$,
we get the following figure:
\begin{figure}[htp]
    \centering
    \includegraphics[width=0.75\textwidth]{./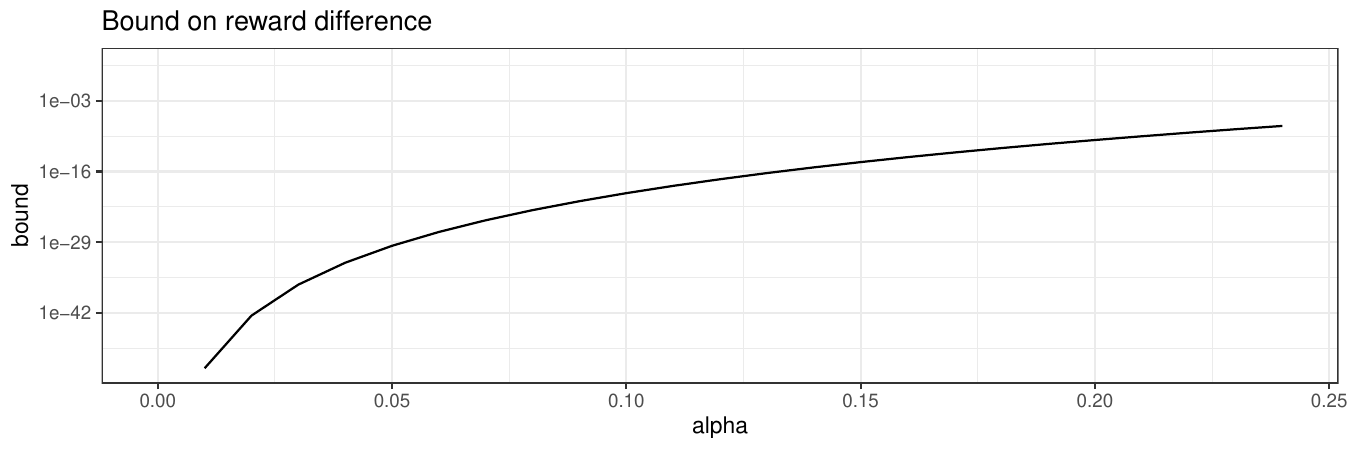}
    \caption{Bound on the error induced by simplifying the game}
\end{figure}

Interpreting the error in the estimated average reward, we can observe that it stays reasonably small.
For example, roughly speaking, for $\alpha = 0.05$, we can expect the difference to be smaller
than $\leq 1/10^{30}$, for $\alpha = 0.1$, we can expect the bound to be $\leq 1/10^{20}$,
for $\alpha = 0.2$, $\leq 1/10^{11}$ epochs, for $\alpha = 0.24$, $\leq 1/10^{8}$ epochs, etc.
\end{toappendix}

\end{document}